\tikzstyle{background rectangle}=
\tikzstyle{fitnode}=[draw=black, rounded corners,line width=0.5pt,dotted]
\tikzstyle{statenode}=
\tikzstyle{sstatenode}=
\tikzstyle{lblnode}=
\tikzstyle{autedge}=[->,line width=0.5pt]
\tikzstyle{labelnode}=[text=black,align=left,sloped,above=-2pt,font=\footnotesize]
\tikzstyle{commentnode}=
\tikzstyle{blocknode}=
\tikzstyle{dotnode}=
\tikzstyle{cndnode}=
\tikzstyle{storenode}=
\tikzstyle{blockedge}=[line width=2pt,->,>=stealth]
\tikzstyle{storeedge}=[dotted,line width=2pt,->,>=stealth]
\author{
  Vojt\v{e}ch Havlena,
  Luk\'{a}\v{s} Hol\'{i}k,
  Ond\v{r}ej Leng\'{a}l, and
  Tom\'{a}\v{s} Vojnar
}
\institute{
  FIT,
  IT4I Centre of Excellence,
  Brno University of Technology,
  Czech Republic}
\newcommand{\inmaintext}[0]{}
\title{
Automata Terms in a Lazy WS$k$S Decision Procedure\\
(Technical Report)
}
\begin{document}

\maketitle

\begin{abstract}

We propose a lazy decision procedure for the logic \wsks.
It builds a~term-based symbolic representation of the state space of the tree
automaton (TA) constructed by the classical \wsks decision procedure.
The classical decision procedure transforms the symbolic representation into a~TA
via a~bottom-up traversal and then tests its language non-emptiness, which
corresponds to satisfiability of the formula.
On the other hand, we start evaluating the representation from the top,
construct the state space on the fly, and utilize opportunities to prune away
parts of the state space irrelevant to the language emptiness test.
In order to do so, we needed to extend the notion of \emph{language terms}
(denoting language derivatives) used in our previous procedure for the linear
fragment of the logic (the so-called WS1S) into \emph{automata terms}.
We implemented our decision procedure and identified classes of formulae on
which our prototype implementation is significantly faster than the classical
procedure implemented in the \mona tool.

\end{abstract}

\section{Introduction}\label{sec:intro}

\emph{Weak monadic second-order logic of $k$ successors} (\wsks) is a logic for
describing regular properties of finite $k$-ary trees.
In addition to talking about trees, \wsks can also encode complex properties
of a~rich class of general graphs by referring to their tree backbones~\cite{pale01}.
\wsks offers extreme succinctness for the price of non-elementary worst-case
complexity.
%
As noticed first by the authors of \cite{glenn-wia-96} in the context
of WS1S (a~restriction that speaks about finite words only), the trade-off
between complexity and succinctness may, however, be turned significantly favourable in
many practical cases through a~use of clever implementation techniques and
heuristics.
Such techniques were then elaborated in the tool \mona{}
\cite{monapaper,monamanual}, the best-known implementation of decision
procedures for WS1S and WS2S.
\mona{} has found numerous applications in verification of programs with complex
dynamic linked data structures \cite{pale01,strand1,strand2,hip/sleek,jahob},
string programs \cite{tateishi}, array programs \cite{zhou}, parametric systems
\cite{baukus,bodeveix,RaduBeeping2}, distributed systems
\cite{KlaNieSun:casestudyautver,smith_00}, hardware verification \cite{basin},
automated synthesis \cite{sandholm,hune,hamza}, and even computational
linguistics \cite{linguistics97}.

Despite the extensive research and engineering effort invested into \mona, due
to which it still offers the best all-around performance among existing
WS1S/WS2S decision procedures, it is, however, easy to reach its scalability
limits.
Particularly, \mona{} implements the classical WS1S/WS2S decision procedures
that build a word/tree automaton representing models of the given formula and
then check emptiness of the automaton's language.
The non-elementary complexity manifests in that the size of the automaton is
prone to explode, which is caused mainly by the repeated determinisation (needed
to handle negation and alternation of quantifiers) and synchronous product
construction (used to handle conjunctions and disjunctions).
Users of \wsks are then forced to either find workarounds, such as
in~\cite{strand2}, or, often restricting the input of their approach, give up
using \wsks altogether~\cite{kuncak:trex}.

As in \mona{}, we further consider WS2S only (this does not change the
expressive power of the logic since $k$-ary trees can be easily encoded into
binary ones).
We revisit the use of tree automata (TAs) in the WS2S decision procedure and
obtain a new decision procedure that is much more efficient in certain cases.
It is inspired by works on \emph{antichain algorithms} for efficient testing of
universality and language inclusion of finite automata
\cite{doyen:antichain,wulf:antichains,bouajjani:antichain,abdulla:when}, which
implement the operations of testing emptiness of a~complement (universality) or
emptiness of a~product of one automaton with the~complement of the other one
(language inclusion) via an \emph{on-the-fly} determinisation and product
construction.
The on-the-fly approach allows one to achieve significant savings by pruning the
state space that is irrelevant for the language emptiness test.
The pruning is achieved by early termination when detecting non-emptiness (which
represents a simple form of \emph{lazy evaluation}), and \emph{subsumption}
(which basically allows one to disregard proof obligations that are implied by
other ones).
Antichain algorithms and their generalizations have shown great efficiency
improvements in applications such as abstract regular model checking
\cite{bouajjani:antichain}, shape analysis \cite{habermehl:forest}, LTL model
checking \cite{antiLTL08}, or game solving \cite{antiGames06}.

Our work generalizes the above mentioned approaches of on-the-fly automata
construction, subsumption, and lazy evaluation for the needs of deciding WS2S.
In our procedure, the TAs that are constructed explicitly by the classical
procedure are represented symbolically by the so-called \emph{automata terms}.
More precisely, we build automata terms for subformulae that start with
a~quantifier (and for the top-level formula) only---unlike the classical
procedure, which builds a~TA for every subformula.
Intuitively, automata terms specify the set of leaf states of the TAs of the
appropriate (sub)formulae.
The leaf states themselves are then represented by \emph{state terms}, whose
structure records the automata constructions (corresponding to Boolean
operations and quantification on the formula level) used to create the given TAs
from base TAs corresponding to atomic formulae.
The leaves of the terms correspond to states of the base automata.
Automata terms may be used as state terms over which further automata terms of
an even higher level are built.
Non-leaf states, the transition relation, and root states are then given
implicitly by the transition relations of the base automata and the structure~of~the~state~terms.

Our approach is a generalization of our earlier work \cite{gaston} on WS1S.
Although the term structure and the generalized algorithm may seem close to
\cite{gaston}, the reasoning behind it is significantly more involved.
Particularly, \cite{gaston} is based on defining the semantics (language) of
terms as a function of the semantics of their sub-terms.
For instance, the semantics of the term $\{q_1,\ldots,q_n\}$ is defined as the
union of languages of the state terms $q_1,\ldots,q_n$, where the language of a
state of the base automaton consists of the words \emph{accepted at that state}.
With TAs, it is, however, not meaningful to talk about trees accepted from a leaf
state, instead, we need to talk about a given state and its \emph{context},
i.e., other states that could be obtained via a bottom-up traversal over the
given set of symbols.
Indeed, trees have multiple leafs, which may be accepted by a number of different
states, and so a tree is \emph{accepted from a set of states}, not from any single
one of them alone.
We therefore cannot define the semantics of a state term as a tree language, and
so we cannot define the semantics of an automata term as the union of the
languages of its state sub-terms.
This problem seems critical at first because without a sensible notion of the
meaning of terms, a straightforward generalization of the algorithm of
\cite{gaston} to trees does not seem possible.
The solution we present here is based on defining the semantics of terms via the
automata constructions they represent rather then as functions of languages of
their sub-terms.

Unlike the classical decision procedure, which builds a~TA corresponding to
a~formula \emph{bottom-up}, i.e. from the atomic formulae, we build automata terms
\emph{top-down}, i.e., from the top-level formula.
This approach offers a lot of space for various optimisations.
Most importantly, we test non-emptiness of the terms \emph{on the fly} during
their construction and construct the terms \emph{lazily}.
In particular, we use \emph{short-circuiting} for dealing with the $\wedge$ and
$\vee$ connectives and \emph{early termination} with possible
\emph{continuation} when implementing the fixpoint computations needed when
dealing with quantifiers.
That is, we terminate the fixpoint computation whenever the emptiness can be
decided in the given computation context and continue with the computation when
such a need appears once the context is changed on some higher-term level.
Further, we define a notion of \emph{subsumption} of terms, which, intuitively,
compares the terms wrt the sets of trees they represent, and allows us to
discard terms that are subsumed by others.

We have implemented our approach in a prototype tool.
When experimenting with it, we have identified multiple parametric families of
WS2S formulae where our implementation can---despite its prototypical
form---significantly outperform \mona{}.
We find this encouraging since there is a~lot of space for further
optimisations and, moreover, our implementation can be easily combined with
\mona{} by treating automata constructed by \mona in the same way as if
they were obtained from atomic predicates.

This is an extended version of the paper~\cite{lazy-cade}.

\section{Preliminaries}\label{sec:prelims}

In this section, we introduce basic notation, trees, and tree automata,
and give a~quick introduction to the \emph{weak monadic
second-order logic of two successors} (WS2S) and its classical decision
procedure.
We give the minimal syntax of WS2S only; see, e.g., Comon \emph{et
al.}~\cite{tata} for more details.

\subsubsection{Basics, Trees, and Tree Automata.}


Let $\Sigma$ be a finite set of symbols, called an \emph{alphabet}.
The set $\Sigma^*$ of \emph{words} over $\Sigma$ consists of finite
sequences of symbols from $\Sigma$.
The \emph{empty word} is denoted by $\epsilon$, with $\epsilon \not \in
\Sigma$.
The \emph{concatenation} of two words~$u$ and~$v$ is denoted by $u.v$ or
simply~$uv$.
The \emph{domain} of a partial function $f: X \to Y$ is the set
%
%
$\domof{f} = \{x \in X \mid \exists y: x\mapsto y \in f\}$, its \emph{image}
is the set
%
%
$\imgof{f} = \{y \in Y \mid \exists x: x\mapsto y \in f \}$, and its
\emph{restriction} to a~set~$Z$ is the function $\frestr{Z} = f \cap
(Z\times Y)$.
For a~binary operator $\bullet$, we write $A \timesof{\bullet} B$ to denote the
augmented product $\{a \bullet b \mid (a,b)\in  A \times B\}$ of $A$ and $B$.


We will consider ordered binary trees.
We call a~word $p\in\leftright^*$ a~tree \emph{position} and $p.\tleft$ and
$p.\tright$ its \emph{left} and \emph{right child}, respectively.
Given an~alphabet~$\Sigma$ s.t.~$\bot \notin \Sigma$, a~\emph{tree}
over~$\Sigma$ is a finite partial function $\tree: \leftright^* \rightarrow
(\Sigma \cup \{\bot\})$ such that
\begin{inparaenum}[(i)]
\item $\domof{\tree}$ is non-empty and prefix-closed, and
\item for all positions $p \in \domof{t}$, either $\tree(p)\in\Sigma$ and $p$
has both children, or $\tree(p) = \bot$ and $p$ has no children, in which case
it is called a~\emph{leaf}.
We let $\leav(\tree)$ be the set of all leaves of $\tree$.
\end{inparaenum}
The position $\epsilon$ is called the \emph{root}, and we
write~$\alltreesx{\Sigma}$ to denote the set of all trees over~$\Sigma$\footnote{%
Intuitively, the $\alltreesx{[\cdot]}$ operator can be seen as a~generalization
of the Kleene star to tree languages.
The symbol~\!\!$\scalebox{1.4}{\chinmu}$ is the Chinese character for a~tree,
pronounced \emph{m\`{u}}, as in English \emph{moo-n}, but shorter and with
a~falling tone, staccato-like.
}.
We abbreviate $\alltreesx{\{a\}}$ as $\alltreesx{a}$ for $a\in\Sigma$.

The \emph{sub-tree} of $\tree$ rooted at a~position $p\in\domof{\tree}$ is the
tree $\tree' = \{p'\mapsto \tree(p.p')\mid p.p'\in\domof{\tree}\}$.
A \emph{prefix} of $\tree$ is a~tree $\tree'$ such that
$\restr{\tree'}{\domof{\tree'}\setminus\leav(\tree')}\subseteq
\restr{\tree}{\domof{\tree}\setminus\leav(\tree)}$.
The \emph{derivative} of a tree $\tree$ wrt a set of trees $S
\subseteq\alltrees$ is the set $\tree- S$ of all prefixes $\tree'$ of $\tree$
such that, for each position $p\in\leav(\tree')$, the sub-tree of $\tree$ at $p$
either belongs to $S$ or it is a leaf of $\tree$.
Intuitively, $\tree-S$ are all prefixes of $\tree$ obtained from $\tree$ by
removing some of the sub-trees in~$S$.
The derivative of a~set of trees~$T \subseteq \alltreesx \Sigma$ wrt~$S$ is the
set $\bigcup_{\tree \in T}(\tree - S)$.


A~(binary) \emph{tree automaton} (TA) over an alphabet~$\Sigma$ is a~quadruple
$\A = (Q, \trans, \leafs, \roots)$ where $Q$ is a~finite set of \emph{states},
$\delta: Q^2 \times \Sigma \to 2^Q$ is a~\emph{transition function},
$\leafs\subseteq Q$ is a~set of \emph{leaf states}, and $\roots\subseteq Q$ is
a~set of \emph{root} states.
We use $(q, r) \ltr{a} s$~to~denote~that $s \in \trans((q,r), a)$.
A~\emph{run} of~$\A$ on~a tree $\tree$ is a total map $\rho: \domof \tree \to Q$
such that
if $\tree(p) = \bot$, then $\rho(p) \in \leafs$,
else $(\rho(p.\tleft), \rho(p.\tright)) \ltr{a} \rho(p)$ with $a = \tree(p)$.
The run~$\rho$ is \emph{accepting} if $\rho(\epsilon) \in \roots$, and the
\emph{language} $\langof \A$ of~$\A$ is the set of all trees on which
$\A$~has~an~accepting run.
$\A$ is \emph{deterministic} if~$|I| = 1$ and $\forall q,r \in Q, a \in \Sigma:
|\delta((q,r), a)| \leq 1$, and \emph{complete} if $I \geq 1$ and $\forall q,r
\in Q, a \in \Sigma: |\delta((q,r), a)| \geq 1$.
Last, for $a \in \Sigma$, we shorten $\delta((q,r),a)$ as $\delta_a(q,r)$, and
we use $\delta_\Gamma(q,r)$ to denote $\bigcup\{\delta_a(q,r) \mid a \in
\Gamma\}$ for a~set~$\Gamma \subseteq \Sigma$.
%

%
%
%


\subsubsection{Syntax and Semantics of WS2S.}\label{sec:label}

\begin{figure}[t]
  \raisebox{41mm}{
  \begin{subfigure}[t]{0.44\linewidth}
  \begin{center}
    \scalebox{0.95}{
\begin{tikzpicture}

\path[use as bounding box] (-29mm,5mm) rectangle (29mm,-40mm);

\ifx\inmaintext\undefined
\newcommand{\tleft}{\mathtt{L}}
\newcommand{\tright}{\mathtt{R}}
\fi

\tikzset{level 1/.style={sibling distance=25mm,level distance=10mm}}
\tikzset{level 2/.style={sibling distance=13mm,level distance=10mm}}
\tikzset{level 3/.style={sibling distance=7mm,level distance=10mm}}
\tikzset{level 4/.style={sibling distance=4mm,level distance=6mm}}

\tikzstyle{emptynode}=[draw,circle,fill=black,inner sep=0.2mm]
\tikzstyle{bluenode}=[emptynode,inner sep=0.8mm,draw=blue,fill=blue!50]
\tikzstyle{voidnode}=[emptynode,draw=none,fill=none]
\tikzstyle{noedge}=[dashed]
\tikzstyle{infiniteedge}=[dotted]

\node (epsilon) [emptynode,label=left:{$\epsilon$}]  {}
  child {node (L) [emptynode,label=left:{$\tleft$}] {}
    child {node (LL) [voidnode] {} edge from parent[noedge]
      child {node (LLL) [voidnode] {} edge from parent[noedge]
        child {node (LLLL) [voidnode] {} edge from parent[infiniteedge]}
        child {node (LLLR) [voidnode] {} edge from parent[infiniteedge]}
      }
      child {node (LLR) [voidnode] {} edge from parent[noedge]
        child {node (LLLL) [voidnode] {} edge from parent[infiniteedge]}
        child {node (LLLR) [voidnode] {} edge from parent[infiniteedge]}
      }
    }
    child {node (LR) [bluenode,label=left:{$\tleft\tright$}] {}
      child {node (LRL) [voidnode] {} edge from parent[noedge]
        child {node (LLLL) [voidnode] {} edge from parent[infiniteedge]}
        child {node (LLLR) [voidnode] {} edge from parent[infiniteedge]}
      }
      child {node (LRR) [voidnode] {} edge from parent[noedge]
        child {node (LRRL) [voidnode] {} edge from parent[infiniteedge]}
        child {node (LRRR) [voidnode] {} edge from parent[infiniteedge]}
      }
    }
  }
  child {node (R) [bluenode,label=left:{$\tright$}] {}
    child {node (RL) [emptynode,label=left:{$\tright\tleft$}] {}
      child {node (RLL) [voidnode] {} edge from parent[noedge]
        child {node (RLLL) [voidnode] {} edge from parent[infiniteedge]}
        child {node (RLLR) [voidnode] {} edge from parent[infiniteedge]}
      }
      child {node (RLR) [bluenode,label=left:{$\tright\tleft\tright$}] {}
        child {node (RLRL) [voidnode] {} edge from parent[infiniteedge]}
        child {node (RLRR) [voidnode] {} edge from parent[infiniteedge]}
      }
    }
    child {node (RR) [bluenode,label=left:{$\tright\tright$}] {}
      child {node (RRL) [voidnode] {} edge from parent[noedge]
        child {node (RRLL) [voidnode] {} edge from parent[infiniteedge]}
        child {node (RRLR) [voidnode] {} edge from parent[infiniteedge]}
      }
      child {node (RRR) [voidnode] {} edge from parent[noedge]
        child {node (RRRL) [voidnode] {} edge from parent[infiniteedge]}
        child {node (RRRR) [voidnode] {} edge from parent[infiniteedge]}
      }
    }
  }
;

\end{tikzpicture}
    }
  \end{center}
  \vspace*{-2mm}
  \vspace*{-4mm}
  \caption{Positions assigned to the variable $X$.
    }
  \label{fig:positions}
  \end{subfigure}
  }
  %
    \hfill
    \begin{subfigure}[t]{0.52\linewidth}
    \begin{center}
      \scalebox{0.95}{
\begin{tikzpicture}

\usetikzlibrary{shapes.multipart}

\tikzset{level/.style={sibling distance=30mm/#1,level distance=10mm}}
\tikzstyle{emptynode}=[draw,circle,fill=black,inner sep=0.2mm]
\tikzstyle{bluenode}=[emptynode,inner sep=0.8mm,draw=blue,fill=blue!50]
\tikzstyle{voidnode}=[emptynode,draw=none,fill=none]

\ifx\inmaintext\undefined
\newcommand{\tleft}{\mathtt{L}}
\newcommand{\tright}{\mathtt{R}}

\tikzstyle{varnode}=[emptynode,rectangle split, rounded corners, rectangle split parts=2,
  rectangle split horizontal,rectangle split part align=base,inner sep=0.8mm, draw, align=center,fill=none]
\tikzstyle{varnodeno}=[varnode,rectangle split part fill={black!15,black!15}]
\newcommand{\parted}[2]{#1\nodepart{two}#2}

\fi

\tikzstyle{varnodefirst}=[varnodeno,rectangle split part fill={blue!50,black!15}]
\tikzstyle{varnodesecond}=[varnodeno,rectangle split part fill={black!15,red!50}]
\tikzstyle{varnodeboth}=[varnodeno,rectangle split part fill={blue!50,red!50}]
\tikzstyle{bottomish}=[level distance=6mm]
\tikzstyle{botnode}=[inner sep=0.1mm]

\node (epsilon) [varnodesecond,label=left:{$\epsilon$}] {\parted 0 1}
  child {node (L) [varnodesecond,label=left:{$\tleft$}] {\parted 0 1}
    child {node (LL) [varnodesecond,label=left:{$\tleft\tleft$}] {\parted 0 1}
      child[bottomish] {node (LLL) [botnode] {$\bot$}}
      child[bottomish] {node (LLR) [botnode] {$\bot$}}
    }
    child {node (LR) [varnodefirst,label=left:{$\tleft\tright$}] {\parted 1 0}
      child[bottomish] {node (LRL) [botnode] {$\bot$}}
      child[bottomish] {node (LRR) [botnode] {$\bot$}}
    }
  }
  child {node (R) [varnodeboth,label=left:{$\tright$}] {\parted 1 1}
    child {node (RL) [varnodeno,label=left:{$\tright\tleft$}] {\parted 0 0}
      child[bottomish] {node[botnode] (RLL) [botnode] {$\bot$}}
      child {node (RLR) [varnodefirst,label=left:{$\tright\tleft\tright$}] {\parted 1 0}
        child[bottomish] {node (RLRL) [botnode] {$\bot$}}
        child[bottomish] {node (RLRR) [botnode] {$\bot$}}
      }
    }
    child {node (RR) [varnodeboth,label=left:{$\tright\tright$}] {\parted 1 1}
      child[bottomish] {node (RRL) [botnode] {$\bot$}}
      child[bottomish] {node (RRR) [botnode] {$\bot$}}
    }
  }
;

\end{tikzpicture}
      }
    \end{center}
    \vspace*{-4mm}
    \caption{Encoding of~$\asgn$ into a tree~$\treeof \asgn$; a~node at
      a~position~$p$ has the value \tikz[baseline,anchor=base]{\node[varnodeno]
      {\parted{$x$}{$y$}};} where $x = 1$ iff $\treeof{\asgn}(p)$ maps~$X$ to~$1$
      and $y = 1$ iff $\treeof{\asgn}(p)$ maps~$Y$ to~$1$.}
    \label{fig:encoding_example:encoding}
    \end{subfigure}
    \vspace*{-2mm}
\caption{An example of an assignment~$\asgn$ to a~pair of variables~$\{X, Y\}$
  s.t. $\asgn(X) =\{\tleft\tright,
    \tright, \tright\tleft\tright,\tright\tright\}$ and
    $\asgn(Y) = \{\epsilon,\tleft,
     \tleft\tleft, \tright,\tright\tright\}$
    and its encoding into a~tree.}
\label{fig:encoding_example}
\end{figure}

WS2S is a~logic that allows quantification over second-order \emph{variables},
which are denoted by upper-case letters $X, Y, \ldots$
and range over \emph{finite sets} of tree positions in $\leftright^*$ (the
finiteness of variable assignments is reflected in the name \emph{weak}).
%
See Fig.~\ref{fig:positions} for an example of a~set of positions
assigned to a variable.
%
%
Atomic formulae (atoms) of WS2S are of the form:
\begin{inparaenum}[(i)]
\item  $X \subseteq Y$,
%
%
%
\item  $X = \suckleftof Y$, and
\item  $X = \suckrightof Y$.
\end{inparaenum}
Formulae are constructed from atoms using the logical connectives $\land,
%
%
\neg$, and the quantifier~$\exists \vars$ where $\vars$ is a~finite set of
variables (we write $\exists X$ when $\vars$ is a~singleton set $\{X\}$).
Other connectives (such as $\vee$ or $\forall$) and predicates (such as the
predicate $\singof X$ for a singleton set $X$) can be obtained as syntactic
sugar (see App.~\ref{app:wsks-pred}).

A~\emph{model} of a~WS2S formula $\varphi(\vars)$ with the set of free
variables~$\vars$ is an assignment $\asgn: \vars \to 2^{\leftright^*}$ of the
free variables of~$\varphi$ to finite subsets of~$\leftright^*$ for which the
formula is \emph{satisfied}, written $\asgn \models \varphi$.
Satisfaction of atomic formulae is defined as follows:
\begin{inparaenum}[(i)]
\item  $\asgn \models X \subseteq Y$ iff $\asgn(X) \subseteq \asgn(Y)$,
%
%
%
\item  $\asgn \models X = \suckleftof Y$ iff $\asgn(X) = \{p.\tleft \mid p \in
\asgn(Y)\}$, and
\item  $\asgn \models X = \suckrightof Y$ iff $\asgn(X) = \{p.\tright \mid p \in
\asgn(Y)\}$.
\end{inparaenum}
Informally, the $\suckleftof Y$ function returns all positions from~$Y$ shifted
to their left child and the $\suckrightof Y$ function returns all positions from~$Y$
shifted to their right child.
Satisfaction of formulae built using Boolean connectives and the quantifier is
defined as usual.
A~formula~$\varphi$ is \emph{valid}, written $\models \varphi$, iff all
assignments of its free variables are its models, and \emph{satisfiable} if it
has a~model.
Wlog, we assume that each variable in a~formula either has only free occurrences
or is quantified exactly once; we denote the set of (free and quantified)
variables occurring in a~formula~$\varphi$ as~$\varsof \varphi$.

\vspace{-0.0mm}
\subsubsection{Representing Models as Trees.}\label{sec:label}
\vspace{-0.0mm}

We fix a~formula~$\varphi$ with variables~$\varsof \varphi = \vars$.
%
A~\emph{symbol}~$\symb$ over~$\vars$ is a (total)~function $\symb: \vars \to \{0,1\}$,
e.g., $\symb = \{X \mapsto 0, Y \mapsto 1\}$ is a~symbol over $\vars = \{X, Y\}$. 
We use $\alphof \vars$ to denote the set of all symbols over~$\vars$
and~$\zerosymb$ to denote the symbol mapping all variables in $\vars$ to~$0$, i.e.,
$\zerosymb = \{X \mapsto 0 \mid X \in \vars\}$.
%



A~finite assignment~$\asgn: \vars \to 2^{\leftright^*}$ of $\varphi$'s variables
can be encoded as a~finite~tree~$\treeof \asgn$ of symbols over~$\vars$ where
every position~$p \in \leftright^*$ satisfies the following conditions:
\begin{inparaenum}[(a)]
  \item
    if $p \in \asgn(X)$, then $\treeof \asgn(p)$ contains $\{X \mapsto 1\}$, and
  \item
    if $p \notin \asgn(X)$, then either $\treeof \asgn(p)$ contains~$\{X \mapsto
    0\}$ or $\treeof \asgn(p) = \bot$ (note that the occurrences of~$\bot$
    in~$\tree$ are limited since~$\tree$ still needs to be a~tree).
\end{inparaenum}
Observe that~$\asgn$ can have multiple encodings: the unique minimum one~$\minenc$ and
(infinitely many) extensions of~$\minenc$ with $\zerosymb$-only trees.
The \emph{language} of $\varphi$ is defined as the set of all encodings of
its models $\langof\varphi = \{
\treeof{\asgn} \in \alltreesx{\alphof\vars} \mid  \asgn \models \varphi \text{
  and } \treeof{\asgn} \text{ is an encoding of } \asgn\}$.


%
%
%



Let $\symb$ be a~symbol over~$\vars$.
For a set of variables $\varsY \subseteq \vars$, we define the \emph{projection}
of~$\symb$ wrt~$\varsY$ as the set of symbols~$\projof \varsY \symb =
\{\symb' \in \alphof \vars \mid \symb_{|\vars\setminus\varsY} \subseteq
\symb'\}$.
%
Intuitively, the projection removes the original assignments of variables
from~$\varsY$ and allows them to be substituted by any possible value.
We define $\projof \varsY \bot = \bot$ and
write~$\pi_Y$ if~$\varsY$ is a~singleton set~$\{Y\}$.
As an example, for $\vars = \{ X,Y \}$ the projection of $\zerosymb$ wrt
$\{X\}$ is given as $\projof{X}{\zerosymb} = \{ \{ X\mapsto 0, Y\mapsto 0 \},
\{ X\mapsto 1, Y\mapsto 0 \} \}$.\footnote{%
Note that our definition of projection differs from the usual one, which would
in the example produce a~single symbol $\{Y \mapsto 0\}$ over a~different
alphabet (the alphabet of symbols over~$\{Y\}$).
}
The definition of projection can be extended to trees $\tree$ over~$\Sigma_{\vars}$
so that $\projof \varsY \tree$ is the set of trees $\{\tree' \in
\alltreesx{\alphof \vars} \mid
\forall p \in \posof \tree : \text{if }\tree(p) = \bot, \text{ then } \tree'(p) =
\bot, \text{ else } \tree'(p) \in \projof \varsY {\tree(p)} \}$
and subsequently to
languages $L$ so that
$\projof \varsY L = \bigcup\{ \projof \varsY \tree \mid \tree\in L \}$.
\subsubsection{The Classical Decision Procedure for WS2S.}\label{sec:treeaut_cons}
\vspace{-0.0mm}

The classical decision procedure for the WS2S logic goes through a direct
construction of a TA $\autof\varphi$ having the same language as a~given formula $\varphi$.
%
%
Let us briefly recall the automata constructions used (cf.~\cite{tata}).
Given a~complete TA $\A = (Q,\delta,I,R)$,
the \emph{complement} assumes that $\A$ is deterministic and returns $\complof\A = (Q,\delta,I,Q\setminus R)$,
the projection returns $\pi_X(\A) = (Q,\delta^{\pi_X},I,R)$ with
$\delta_a^{\pi_X}(q,r) = \delta_{\projof X a}(q,r)$,
%
and the \emph{subset construction} returns the deterministic and complete automaton
$\subsetof\A =
(2^Q,\subsetof\delta,\{I\},\subsetof R)$ where
$\subsetof{\delta_a}(S,S') = \bigcup_{q\in S,q'\in S'}\delta_a(q,q')$ and
$\subsetof R = \{ S\subseteq Q\mid S\cap R\neq \emptyset \}$.
%
%
The binary operators $\circ\in\{\cup,\cap\}$ are implemented through
a~\emph{product construction}, which---given the TA $\A$ and another complete TA
$\A' = (Q',\delta',I',R')$---returns the automaton $\A\circ\A'=(Q\times
Q',\Delta^\times,I^\times,R^\circ)$ where
$\Delta^\times_a((q,r),(q',r')) = \Delta_a(q,q')\times\Delta'_a(r,r')$,
$I^\times = I\times I'$, and
%
for $(q,r)\in Q\times Q'$,
$(q,r)\in R^\cap \Leftrightarrow q\in R\land r\in R'$ and
$(q,r)\in R^\cup \Leftrightarrow q\in R\lor r\in R'$.
%
%
The language non-emptiness test can be implemented through the equivalence
$\langof{\A}\neq \emptyset$ iff $\reach_{\trans}(I)\cap R\neq \emptyset$
where the set $\reach_{\trans}(S)$ of states \emph{reachable} from a set $S\subseteq Q$ through $\delta$-transitions is computed as the least fixpoint
\begin{equation}
\reach_{\delta}(S)= \mu Z.~S \cup \bigcup_{q, r\in Z} \delta({q},{{r}}) .
\label{eq:reach}
\end{equation}
The same fixpoint computation is used to compute the derivative wrt
$\alltreesx a$ for some $a\in \Sigma$ as
$\A - \alltreesx a = (Q,\delta,\reach_{\delta_a}(I),R)$: the new leaf states are all those reachable from $I$ through $a$-transitions.

The classical \wsks decision procedure uses the above operations to constructs the automaton $\autof\varphi$ inductively to the structure of $\varphi$ as follows:
\begin{inparaenum}[(i)]
\item If $\varphi$ is an atomic formula, then $\autof\varphi$ is a pre-defined
  \emph{base} TA over~$\alphof \vars$
(the particular base automata for our atomic predicates can be found, e.g., in \cite{tata},
and we list them also in App.~\ref{sec:basicTA}).
\item If $\varphi = \varphi_1 \land \varphi_2$,
then
$\autof\varphi = \autof{\varphi_1} \cap \autof{\varphi_2}$.
\item If $ \varphi = \varphi_1 \lor \varphi_2$,
then
$\autof\varphi = \autof{\varphi_1} \cup \autof{\varphi_2}$.
\item If  $\varphi = \neg \psi$, then $\autof\varphi = \complof{\autof\psi}$.
\item Finally, if $\varphi = \exists X.\ \psi$, then
$\autof\varphi = \subsetof{(\pi_X(\autof\psi))}-\allzerotrees$.
\end{inparaenum}

Points (i) to (iv) are self-explanatory.
In point (v), the projection implements the quantification by forgetting the
values of the $X$~component of all symbols.
Since this yields non-determinism, projection is followed by determinisation
by the subset construction.
Further, the projection can produce some new trees that contain $\zerosymb$-only
labelled sub-trees, which need not be present in some smaller encodings of the
same model.
Consider, for example, a~formula $\psi$ having the language~$\langof\psi$ given
by the tree~$\treeof \asgn$ in Fig.~\ref{fig:encoding_example:encoding} and all
its $\zerosymb$-extensions.
To obtain $\langof{\exists X.\psi}$, it is not sufficient to make the
projection $\projof{X}{\langof\psi}$ because the projected language does not
contain the minimum encoding $\minenc$ of $\asgn: Y \mapsto \{\epsilon, \tleft,
\tleft\tleft, \tright, \tright\tright\}$, but only those encodings~$\asgn'$
such that~$\asgn'(\tright\tleft\tright) = \{Y \mapsto 0\}$.
Therefore,
the $\zerosymb$-derivative is needed to saturate the language with \emph{all} encodings
of the encoded models (if some of these encodings were missing, the inductive
construction could produce a wrong result, for instance, if the language were
subsequently complemented).
%
%
Note that the same effect can be achieved by replacing the set of leaf states~$I$
of~$\autof{\varphi}$ by $\reach_{\Delta_{\zerosymb}}(I)$ where $\Delta$ is the
transition function of $\autof \varphi$.
See~\cite{tata} for more details.

\section{Automata Terms} \label{sec:terms}

Our algorithm for deciding WS2S may be seen as an alternative implementation of the classical procedure from \secref{sec:treeaut_cons}.
The main innovation is the data structure of \emph{automata terms}, which implicitly represent the automata constructed by the automata operations.
Unlike the classical procedure---which proceeds by a~bottom-up traversal on the
formula structure, building an automaton for each sub-formula before proceeding
upwards---automata terms allow for constructing parts of automata at higher levels
from parts of automata on the lower levels even though the construction of the
lower level automata has not yet finished.
This allows one to test the language emptiness on the fly
and use techniques of state space pruning, which will be discussed later in
\secref{sec:algorithm}.
Proofs of the lemmas can be found in App.~\ref{app:proofs}.

\makeatletter
\newcommand{\specialcell}[1]{\ifmeasuring@#1\else\omit$\displaystyle#1$\ignorespaces\fi}
\makeatother

\begin{wrapfigure}[8]{r}{5.2cm}
\vspace*{-13mm}
\hspace*{-8mm}
\begin{minipage}{6.6cm}
\begin{flalign*}
A ::={}&\specialcell{ S \mid D  \hfill\textit{(automata term)}}\\
S ::={}&\specialcell{ \{ t, \dots, t \} \hfill\textit{(set term)}}\\
D ::={}&\specialcell{ S - \allzerotrees  \hfill\textit{(derivative term)}}\\
t ::={}&\specialcell{  q \mid t \disj t \mid t \conj t \mid {}
\hspace{6mm}\hfill\raisebox{-2mm}{\textit{(state term)}}} \\[-2mm]
  &\overline t \mid \pi_X(t) \mid S \mid D\hspace{4mm}
\end{flalign*}
  \end{minipage}
	\vspace{-2mm}
	\caption{Syntax of terms.}
	\label{fig:syntax}
\end{wrapfigure}

\vspace{-4mm}
\subsubsection{Syntax of automata terms.}
Terms are created according to the grammar in Fig.~\ref{fig:syntax} starting from
%
%
states~$q \in Q_i$, denoted as \emph{atomic states}, of a~given finite set of \emph{base
automata} $\B_i = (Q_i,\delta_i,I_i,R_i)$ with pairwise
disjoint sets of states.
%
%
%
For simplicity, we assume that the base automata are complete,
%
and we denote by $\B = (Q^\B,\delta^\B,I^\B,R^\B)$ their component-wise union.
\emph{Automata terms}~$A$ specify the set of leaf states of an automaton.
\emph{Set terms}~$S$ list a~finite number of the leaf states explicitly, while
\emph{derivative terms}~$D$ specify them symbolically as states reachable from
a~set of states~$S$ via~$\zerosymb$s.
The states themselves are represented by \emph{state terms}~$t$
(notice that set~terms $S$ and derivate terms $D$ can both be automata and state
terms).
Intuitively, the structure of state terms records the automata constructions
used to create the top-level automaton from states of the base automata.
%
%
Non-leaf state terms, the state terms' transition function, and root state terms
are then defined inductively from base automata as described below in detail.
%
%
We will normally use $t,u$ to denote terms of all types (unless the type of the term needs to be emphasized).

\begin{example}
\label{ex:term}
Consider a~formula
$\varphi \equiv \neg\exists X.\ \singof X \wedge X = \{ \epsilon \}$
and its corresponding automata term
$
 	\termof \varphi = \Bigl\{\,\complof{
			\{
				\pi_X(\{q_0\} \conj \{p_0\})
			\}
			- \allzerotrees
 	}\,\Bigr\}
$
(we will show how $\termof \varphi$ was obtained from~$\varphi$ later).
%
For the sake of presentation,
  we will consider the following base automata for the predicates
$\sing(X)$ and  $X=\{\epsilon\}$:
$\autof{\sing(X)} = (\{q_0,q_1,q_s\},\delta,\{q_0\},\{q_1\})$ and
$\autof{ X = \{\epsilon\}} = (\{p_0,p_1,p_s\},\delta',\{p_0\},\{p_1\})$
where $\delta$ and $\delta'$ have the following sets of transitions
(transitions not defined below go to the sink states~$q_s$ and~$p_s$, respectively):
$$
\begin{array}{lcc@{\hspace{8mm}}lc}
\delta:	& (q_0,q_0)\ltr{\{X\mapsto 0\}}q_0, & (q_0,q_1)\ltr{\{X\mapsto 0\}}q_1, &
\delta': & (p_0,p_0)\ltr{\{X\mapsto 0\}}p_0, \\
			& (q_0,q_0)\ltr{\{X\mapsto 1\}}q_1, &
                        (q_1,q_0)\ltr{\{X\mapsto 0\}}q_1  &
                        &(p_0,p_0)\ltr{\{X\mapsto 1\}}p_1.
\end{array}
$$
The term $\termof\varphi$ denotes the TA
$\complof{\subsetof{(\pi_X(\autof{\sing(X)} \cap \autof{ X =
\{\epsilon\}})-\allzerotrees)}}$ constructed by the operations of
intersection, projection, derivative, subset construction, and complement.
%
\qed
\end{example}


\renewcommand{\defeq}{=}
\renewcommand{\defequiv}{\Leftrightarrow}
\begin{wrapfigure}[9]{r}{4.1cm}
\vspace*{-12.5mm}
\hspace*{-3mm}
\begin{minipage}{4.4cm}
\begin{align}
  \rt(t \disj u)        \defequiv {} & \rt(t) \lor \rt(u)
\label{eq:rtor}
\\
  \rt(t \conj u)        \defequiv {} & \rt(t) \land \rt(u)
\label{eq:rtand}
\\
  \rt(\pi_X(t))             \defequiv {} & \rt(t)
\label{eq:rtpi}
\\
  \rt(\overline t)          \defequiv {} & \neg \rt(t)
\label{eq:rtnot}
\\
  \rt(S) \defequiv {} & \exists t\in S.\,\rt(t)
\label{eq:rtset}
\\[-1mm]
  \rt(q)                    \defequiv {} & q \in \roots^\B
\label{eq:rtQ}
\end{align}
\end{minipage}
	\vspace{-2mm}
	\caption{Root term states.}
	\label{fig:root}
\end{wrapfigure}

\vspace{-4mm}
\subsubsection{Semantics of terms.}
We will define the denotation of an automata term~$\term$ as the automaton $\autof{t} = (Q,\Delta,I,R)$.
For a set automata term $\term= S$, we define
$I = S$,
$Q = \reach_\Delta(S)$
(i.e.,~$Q$ is the set of state terms reachable from the leaf state terms),
and $\Delta$ and $R$ are defined inductively to the structure of~$\term$.
Particularly,
$R$~contains the terms of $Q$ that satisfy the predicate $\rt$ defined in Fig.~\ref{fig:root},
and $\Delta$ is defined in Fig.~\ref{fig:trans},
with the addition that whenever the rules in Fig.~\ref{fig:trans} do not apply,
%
then we let $\Delta_a(t,t') = \{\emptyset\}$.
The~$\emptyset$ here is used as a~universal sink state in order to maintain~$\Delta$ complete,
which is needed for automata terms representing complements to yield the expected language.
%
\begin{wrapfigure}[11]{r}{7.2cm}
  \vspace*{-12mm}
  \hspace*{-3mm}
  \begin{minipage}{7.5cm}
    \begin{align}
      \Transof{t \disj u}{t' \disj u'} a &\defeq \Transof{t}{t'} a \timesof{\disj} \Transof{u}{u'} a
\label{eq:deltaor}
\\
      \Transof{t \conj u}{t' \conj u'} a &\defeq \Transof{t}{t'} a \timesof{\conj} \Transof{u}{u'} a
\label{eq:deltaand}
\\
      \Transof{\pi_X(t)}{\pi_X(t')} a & \defeq \{\pi_X(u)\mid u\in \Delta_{\pi_X(a)}(t, t')\}
\label{eq:deltapi}
\\
      \Transof{\overline{t}}{\overline{t'}} a & \defeq \big\{\overline{u} \mid u\in\Transof{t}{t'} a \big\}
\label{eq:deltanot}
\\
      \Transof{S}{S'} a &\defeq \bigg\{\bigcup_{t\in S, t'\in S'}\Transof{t}{t'} a\bigg\}
\label{eq:deltaset}
\\
      \Transof{q}{r} a &\defeq \delta^\B_a({q},{r})
\label{eq:deltaQ}
    \end{align}
  \end{minipage}
	\vspace{-2mm}
	\caption{Transitions among compatible state terms.}
	\label{fig:trans}
\end{wrapfigure}

The transitions of~$\Delta$ for terms of the type $\disj$, $\conj$, $\pi_X$,
$\complof{\,\cdot\,}$, and~$S$ are built from the transition function of their
sub-terms analogously to how the automata operations
of the product union, product intersection, projection, complement, and subset construction, respectively, build the transition function from the transition functions of their arguments (cf.~\secref{sec:prelims}).
The only difference is that the state terms stay \emph{annotated} with the particular operation by which they were made (the annotation of the set state terms are the set brackets).
The root states are also defined analogously as in the classical constructions.
In Figs.~\ref{fig:root} and~\ref{fig:trans},
the terms $t,t',u,u'$ are arbitrary terms, $S,S'$ are set terms, and $q,r \in Q^\B$.


Finally, we complete the definition of the term semantics by adding the
definition of semantics for the derivative term
$S - \allzerotrees$.
%
%
%
This term is a~symbolic representation of the set term that contains all state terms upward-reachable
from~$S$ in~$\autof{S}$ over~$\zerosymb$.
Formally, we first define the so-called \emph{saturation} of $\autof{S}$ as
\begin{equation}\label{eq:semric}
  \saturof {( S - \allzerotrees)} = \reach_{\Delta_{\zerosymb}}(S)
\end{equation}
%
%
(with $\reach_{\Delta_{\zerosymb}}(S)$ defined as the
fixpoint~\eqref{eq:reach}), and we complete the definition of $\Delta$ and $\rt$
in Figs.~\ref{fig:root} and~\ref{fig:trans} with three new rules to be used with
a~derivative term~$D$:
\begin{multicols}{3}
\noindent
\begin{equation}
\hspace*{-1mm}
\Transof{D}{u}{a} = \Transof{\saturof D}{u}{a}
\label{eq:deltasl}
\end{equation}
\columnbreak
\begin{equation}
\hspace*{-1mm}
\Transof{u}{D}{a} = \Transof{u}{\saturof D}{a}
\label{eq:deltasr}
\end{equation}
\columnbreak
\begin{equation}
\rt(D) \Leftrightarrow \rt(\saturof D)
\label{eq:roots}
\end{equation}
\end{multicols}
\vspace{-7mm}

\noindent
The automaton $\autof D$ then equals $\autof{\saturof{D}}$,
i.e., the semantics of a~derivative term is defined by its saturation.

\vspace{-2mm}
\begin{example}
		Let us consider a~derivative term
		$t = \{
			\pi_X(\{q_0\} \conj \{p_0\})
		\}
    - \allzerotrees$, which occurs within the nested automata term~$\termof
    \varphi$ of Example~\ref{ex:term}.
    The set term representing all terms reachable upward from $t$ is
    then the term
    \begin{align*}
      \saturof t = \{&
			\pi_X(\{q_0\} \conj \{p_0\}),
			\pi_X(\{q_1\} \conj \{p_1\}),
			\pi_X(\{q_s\} \conj \{p_s\}),\\
      &\pi_X(\{q_1\} \conj \{p_s\}),
			\pi_X(\{q_0\} \conj \{p_s\})
		 \}.
    \end{align*}
    The semantics of~$t$ is therefore the automaton $\autof t$ with the set of
    states given by~$\saturof t$.
\qed
\end{example}

\vspace{-6mm}
\subsubsection{Properties of terms.}
An implication of the definitions above, essential for termination of our
algorithm in \secref{sec:algorithm},  is that the automata represented by the
terms indeed have finitely many states.
This is the direct consequence of \Cref{lemma:finite}.
\begin{restatable}{lemma}{lemmaFinite}
\label{lemma:finite}
The size of $\reach_\Delta(t)$ is finite for any automata term $t$.
\end{restatable}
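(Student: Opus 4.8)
The plan is to prove finiteness by exhibiting, for the initial set term, a \emph{finite universe} of state terms that contains it and is closed under $\Delta$; since $\reach_\Delta(S)$ is by definition the least $\Delta$-closed set containing $S$, it is then sandwiched inside this finite universe. The whole argument proceeds by induction on the syntactic structure (size) of the term, which is also what I would use to break the apparent circularity created by the derivative terms.

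First I would isolate the invariant that $\Delta$ \emph{preserves the shape} of state terms. Reading off Fig.~\ref{fig:trans}, every rule keeps the top-level operator ($\disj$, $\conj$, $\pi_X$, $\overline{\,\cdot\,}$, or the set brackets) and recurses into the operands, while rule~\eqref{eq:deltaQ} keeps an atomic state inside the finite base state set $Q^\B$; and whenever two operands have mismatched operators the rules do not apply, so the successor is the single sink $\emptyset$. Crucially, completeness of the base automata---which is preserved by product, projection, complement, and the subset construction---ensures that on \emph{compatible} operands the operand-successors are always non-empty, so no stray sink is smuggled into a product and the shape is genuinely preserved. I would then define, by recursion on the state sub-terms of $S$, a universe $\mathcal U$ consisting of all state terms whose shape is obtained from the shape of some syntactic sub-term of $S$ by optionally replacing arbitrary sub-positions with the sink $\emptyset$, with atomic leaves ranging over $Q^\B$ and set-shaped positions ranging over the (finite) powerset of the admissible element terms. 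Finiteness of $\mathcal U$ is then immediate by structural induction: there are finitely many sub-terms and sub-positions, $Q^\B$ is finite, the $\disj/\conj/\overline{\,\cdot\,}/\pi_X$ cases stay finite, and the set-term case is a powerset of a finite pool, hence finite.

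Next I would verify the two closure properties $S \subseteq \mathcal U$ and $\Delta_a(u,u') \subseteq \mathcal U$ for all $u,u' \in \mathcal U$ and all $a$, which together give $\reach_\Delta(S) \subseteq \mathcal U$. For compatible operands the transition rules descend in lock-step, so a successor of $u_1 \disj u_2$ and $u_1' \disj u_2'$ is of the form $v_1 \disj v_2$ with each $v_i$ a successor of the $i$-th operands, landing inside the universe by the inner induction; mismatched positions collapse to the sink, which the sink-at-any-position definition of $\mathcal U$ explicitly permits, so cross-combinations of sub-terms of differing shapes stay inside $\mathcal U$. For the set-term rule~\eqref{eq:deltaset} the successor is again a single set term whose elements are successors of the original elements, hence a subset of the fixed finite pool, so it remains in the powerset layer of $\mathcal U$.

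Finally I would discharge the derivative terms. A term $D = S - \allzerotrees$ is interpreted through its saturation $\saturof D = \reach_{\Delta_{\zerosymb}}(S)$ (Eq.~\eqref{eq:semric}), and by rules~\eqref{eq:deltasl}--\eqref{eq:roots} every transition touching $D$ is the corresponding transition of the set term $\saturof D$; thus $\reach_\Delta(D) = \reach_\Delta(\saturof D)$ reduces to the set-term case once $\saturof D$ is known to be finite. But finiteness of $\saturof D$ is exactly the reachability claim for the strictly smaller term $S$ with the alphabet restricted to $\{\zerosymb\}$, so it is supplied by the induction hypothesis. The step I expect to be the main obstacle is precisely this combination of (i) the subset construction, whose successors live in a powerset and therefore force the universe to be a powerset layer, with (ii) the requirement that the universe be closed under combinations of sub-terms of \emph{different} shapes, which I resolve by allowing the sink $\emptyset$ at every position; doing all of this while keeping the induction well-founded---recording that $S$ is a proper sub-term of $S - \allzerotrees$ and that each set element is a proper sub-term of its set term, so that saturations are always computed over strictly smaller terms---is the delicate bookkeeping that the formal proof must get right.
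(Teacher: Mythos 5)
Your proof is correct and is essentially the paper's own argument: both exhibit a finite, $\Delta$-closed superset of $\reach_\Delta(t)$ determined by the structure of $t$, relying on the facts that transitions never produce structurally larger terms and that derivative sub-terms can be saturated via an induction over the strictly smaller inner set term (explicit in your write-up, implicit in the paper's sketch). The only difference is bookkeeping: where you construct an explicit universe of shapes (with the sink $\emptyset$ allowed at arbitrary positions and a powerset layer for set terms), the paper compresses the same invariant into a numeric \emph{depth} measure, observing that $\Delta$ never increases depth and that only finitely many terms over the finite base-state set $Q^\B$ exist up to any given depth.
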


%
Intuitively, the terms are built over a finite set of states $Q^\B$, they are finitely branching,
and the transition function on terms does not increase their depth.

Let us further denote by $\seman{t}$ the language $\langof{\autof\term}$ of the automaton induced by a~term~$\term$.
\Cref{lem:sem} below shows that languages of terms can be defined from the languages of their sub-terms if the sub-terms are set terms of derivative terms.
The terms on the left-hand sides are implicit representations of the automata
operations of the respective language operators on the right-hand sides.
%
%
%
%
%
The main reason why the lemma cannot be extended to all types of sub-terms and
yield an inductive definition of term languages
is that it is not meaningful to talk about the bottom-up language of an
isolated state term that is neither a~set term nor a~derivative term (which
both are also automata terms).
%
%
%
%
This is also one of the main differences from~\cite{gaston} where every term
has its own language, which makes the reasoning and the correctness proofs in
the current paper significantly more involved.

\newcounter{savedcount}
\setcounter{savedcount}{\value{equation}}
\setcounter{equation}{0}
\renewcommand{\theequation}{{\it\alph{equation}}}

\newcommand{\T}{t}
\begin{restatable}{lemma}{lemmaSem}
  \label{lem:sem}
  For automata terms $\aterm_1, \aterm_2$ and a~set term $\sterm$, the following equalities hold:
\begin{multicols}{2}
\noindent
\begin{align}
    \label{sem:eq:deter} \semanreasonable{\{ \aterm_1\}}   &= \semanreasonable{\aterm_1} \\
    \label{sem:eq:disj} \semanreasonable{\{ \aterm_1 \disj \aterm_2 \}}     &= \semanreasonable{\aterm_1} \cup \semanreasonable{\aterm_2}\\
    \label{sem:eq:conj} \semanreasonable{\{ \aterm_1 \conj \aterm_2 \}}     &= \semanreasonable{\aterm_1} \cap \semanreasonable{\aterm_2}
\end{align}
\columnbreak
\begin{align}
    \label{sem:eq:compl} \semanreasonable{\{ \overline{\aterm_1} \}}    &= \overline{\semanreasonable{\aterm_1}}\\
    \label{sem:eq:proj} \semanreasonable{\{ \pi_X(\aterm_1) \}}        &= \pi_X(\semanreasonable{\aterm_1}) \\
    \label{sem:eq:deriv} \semanreasonable{ \sterm - \allzerotrees}  &= \semanreasonable{\sterm} - \allzerotrees
\end{align}
\end{multicols}
\end{restatable}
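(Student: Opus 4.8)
The plan is to prove each of the six equalities by showing that the automaton $\autof{\cdot}$ named on its left-hand side coincides, up to its language, with the result of applying the matching classical construction of \secref{sec:treeaut_cons} to $\autof{\aterm_1}$ and $\autof{\aterm_2}$, and then invoking the correctness of that classical construction. This is possible because the rules in Figs.~\ref{fig:root} and~\ref{fig:trans} were designed so that a compound state term carries, as an annotation, exactly the automata operation whose transition function and root states it reproduces; the work is to lift this syntactic correspondence to an identification of whole automata.

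The key preliminary step I would establish first is that for any automata term $\aterm$ (set term or derivative term), using $\aterm$ as a single leaf state gives $\autof{\{\aterm\}}\cong\subsetof{\autof{\aterm}}$, the determinisation of $\autof{\aterm}$. For a set term $S$ this is immediate by unfolding definitions: rule~\eqref{eq:deltaset} computes $\bigcup_{t\in S,\,t'\in S'}\Transof{t}{t'}{a}$, which is literally the subset-construction transition $\subsetof{\delta_a}(S,S')$, and rule~\eqref{eq:rtset} declares $S$ a root iff $S$ contains a root state term of $\autof{S}$, which is exactly the subset-construction accepting condition $\subsetof{R}$; hence the state terms reachable from $S$ are precisely the reachable subsets of $\subsetof{\autof{S}}$. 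For a derivative term $D$ the rules~\eqref{eq:deltasl}--\eqref{eq:roots} route every transition and root test through the set term $\saturof{D}$, so the same holds, the only difference being that the leaf state $D$ and the leaf state $\saturof{D}$ of $\subsetof{\autof{D}}$ are distinct terms with identical outgoing transitions and root status, hence language-equivalent. Since determinisation preserves the language, this observation already yields~\eqref{sem:eq:deter}: $\semanreasonable{\{\aterm_1\}}=\langof{\subsetof{\autof{\aterm_1}}}=\langof{\autof{\aterm_1}}=\semanreasonable{\aterm_1}$.

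With the components handled by this observation, each remaining case reduces to one classical operation. Rules~\eqref{eq:deltaand}/\eqref{eq:rtand} and~\eqref{eq:deltaor}/\eqref{eq:rtor} make $\autof{\{\aterm_1\conj\aterm_2\}}$ and $\autof{\{\aterm_1\disj\aterm_2\}}$ equal to the product automata $\subsetof{\autof{\aterm_1}}\cap\subsetof{\autof{\aterm_2}}$ and $\subsetof{\autof{\aterm_1}}\cup\subsetof{\autof{\aterm_2}}$ (both factors complete), so product correctness gives~\eqref{sem:eq:conj} and~\eqref{sem:eq:disj}. Rules~\eqref{eq:deltanot}/\eqref{eq:rtnot} make $\autof{\{\overline{\aterm_1}\}}$ equal to $\complof{\subsetof{\autof{\aterm_1}}}$, where it is crucial that the preliminary observation guarantees that the complemented automaton $\subsetof{\autof{\aterm_1}}$ is deterministic and complete---precisely the precondition of the complement construction---so~\eqref{sem:eq:compl} follows. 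Rule~\eqref{eq:deltapi} makes $\autof{\{\pi_X(\aterm_1)\}}$ equal to $\pi_X(\subsetof{\autof{\aterm_1}})$ (which may again be nondeterministic, but this is harmless for projection), giving~\eqref{sem:eq:proj}. Finally, for~\eqref{sem:eq:deriv} I unfold $\autof{\sterm-\allzerotrees}=\autof{\saturof{(\sterm-\allzerotrees)}}$ with $\saturof{(\sterm-\allzerotrees)}=\reach_{\Delta_{\zerosymb}}(\sterm)$; monotonicity and idempotence of $\reach_\Delta$ together with $\sterm\subseteq\reach_{\Delta_{\zerosymb}}(\sterm)\subseteq\reach_\Delta(\sterm)$ give $\reach_\Delta(\reach_{\Delta_{\zerosymb}}(\sterm))=\reach_\Delta(\sterm)$, so this automaton has the same states, transitions, and roots as $\autof{\sterm}$ but with leaf states $\reach_{\Delta_{\zerosymb}}(\sterm)$, i.e.\ it is exactly the classical derivative $\autof{\sterm}-\allzerotrees$, whose language is $\semanreasonable{\sterm}-\allzerotrees$.

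The step I expect to be the main obstacle is making these ``equal to a classical construction'' claims rigorous rather than merely suggestive, since $\autof{\cdot}$ is defined only on automata terms whereas the intermediate objects ($\aterm_1\conj\aterm_2$, $\overline{\aterm_1}$, $\pi_X(\aterm_1)$, and so on) are state terms for which no stand-alone automaton is defined. The delicate work is concentrated in the preliminary observation: one must exhibit a bijection between the reachable state terms and the reachable states of the classical construction, check that it commutes with $\Delta$ and preserves $\rt$, and handle the bookkeeping around the universal sink $\emptyset$ (needed to keep all automata complete) and around the leaf-state mismatch between $D$ and $\saturof{D}$ for derivative terms. Once this correspondence is in place, the six equalities become routine appeals to the correctness of the classical operations recalled in \secref{sec:treeaut_cons}.
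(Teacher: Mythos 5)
Your proposal is correct, and it reaches the result by a genuinely different route than the paper. You identify each term automaton with the output of a classical construction from \secref{sec:treeaut_cons}---$\autof{\{\aterm\}}$ with the subset construction $\subsetof{\autof{\aterm}}$ (your preliminary observation), and then $\autof{\{\aterm_1\conj\aterm_2\}}$, $\autof{\{\aterm_1\disj\aterm_2\}}$, $\autof{\{\overline{\aterm_1}\}}$, $\autof{\{\pi_X(\aterm_1)\}}$, and $\autof{\sterm-\allzerotrees}$ with the product, complement, projection, and derivative of these determinisations---after which you cite the known correctness of those constructions. The paper never takes this step: it argues about runs directly, introducing expanded terms $\expt{t}$ (derivative sub-terms replaced bottom-up by their saturations) and proving each equality by translating an accepting run of the left-hand automaton into one of the right-hand automaton and back; in effect it re-proves the correctness of determinisation, product, complement, and projection inside the term world rather than citing it (and it proves slight generalisations along the way, e.g.\ sets of several automata terms in \eqref{sem:eq:deter}, which it reuses in later proofs). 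The mathematical content is the same---your isomorphism checks (transitions commute, $\rt$ is preserved, the sink $\emptyset$ and the $D$-versus-$\saturof{D}$ leaf mismatch are harmless) correspond one-to-one to the paper's run translations---so your route buys modularity and brevity, while the paper's buys self-containedness. The one caveat concerns \eqref{sem:eq:deriv}: for \eqref{sem:eq:deter}--\eqref{sem:eq:proj} the facts you invoke are textbook results, but the claim that $(Q,\delta,\reach_{\delta_{\zerosymb}}(I),R)$ accepts exactly the language derivative $\langof{\A}-\allzerotrees$ is only \emph{asserted} in \secref{sec:treeaut_cons}, not proved there, and it is the least standard of the ``classical'' facts; the paper's appendix proof of \eqref{sem:eq:deriv}---the single-step claim $\seman{S\ominus\syms}=\seman{S}-\syms^1$ iterated until the fixpoint stabilises, which requires \Cref{lemma:finite}---is precisely the proof of that assertion. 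So either you take that preliminary statement as given (as you do), or your derivative case must absorb an argument of this kind; everything else in your plan is routine to make rigorous along the lines you indicate.
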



\setcounter{equation}{\value{savedcount}}
\renewcommand{\theequation}{\arabic{equation}}




\vspace{-10mm}
\subsubsection{Terms of formulae.}

\renewcommand{\atom}{0}
\begin{wrapfigure}[8]{r}{4.7cm}
  \vspace*{-13mm}
  \hspace*{-3mm}
  \begin{minipage}{5.0cm}
    \begin{flalign}
      \translof{\varphi_{\atom}} & {} \defeq I_{\varphi_0}\label{eq:trans0}\\
      \translof{\varphi \land \psi} & {} \defeq \translof{\varphi} \conj \translof{\psi}\\
      \translof{\varphi \lor \psi} & {} \defeq \translof{\varphi} \disj \translof{\psi}\\
      \translof{\neg \varphi} & {} \defeq \overline{\translof{\varphi}} \\[-1mm]
      \translof{\exists X.\ \varphi} & {} \defeq \{\pi_{X}(\translof{\varphi})\} - \allzerotrees\label{eq:transE}
    \end{flalign}
  \end{minipage}
	\vspace{-2mm}
	\caption{From formulae to state-terms.}
	\label{fig:transl}
\end{wrapfigure}
Our algorithm in \secref{sec:algorithm} will translate a~WS2S
formula~$\varphi$ into the automata term $\termof\varphi = \{\translof
\varphi\}$ representing a deterministic automaton with its only leaf state
represented by the state term $\translof\varphi$.
%
%
The base automata of $\termof\varphi$ include the
%
%
automaton $\autof{\varphi_\atom}$ for each atomic predicate~$\varphi_\atom$ used in~$\varphi$.
The state term $\translof\varphi$ is then defined inductively to the structure of $\varphi$ as shown in Fig.~\ref{fig:transl}.
In the definition, $\varphi_0$ is an atomic predicate, $I_{\varphi_0}$ is the set of leaf states of $\autof{\varphi_\atom}$, and $\varphi$ and $\psi$ denote
arbitrary WS2S formulae.
We note that the translation rules may create sub-terms of the form
$\{\{t\}\}$, i.e., with nested set brackets.
Since $\{\cdot\}$ semantically means determinisation by subset construction,
such double determinisation terms can be always simplified to $\{t\}$ (cf.~\Cref{lem:sem}\ref{sem:eq:deter}).
See Example~\ref{ex:term} for a~formula~$\varphi$ and its corresponding term~$\termof\varphi$.
\Cref{lem:languages_match} establishes the correctness of the formula to term translation.
\begin{restatable}{theorem}{thmLanguagesMatch} \label{lem:languages_match}
  Let $\varphi$ be a WS2S formula. Then $\langof{\varphi} = \semanreasonable{\termof\varphi}$.
\end{restatable}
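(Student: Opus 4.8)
The plan is to prove the equivalent statement $\semanreasonable{\termof\varphi} = \langof{\autof\varphi}$ by structural induction on $\varphi$ and then conclude via the correctness $\langof{\autof\varphi} = \langof\varphi$ of the classical construction recalled in \secref{sec:treeaut_cons}. Reducing to $\langof{\autof\varphi}$ rather than to $\langof\varphi$ directly keeps both sides of the induction ordinary tree languages, so I never have to re-derive the encoding/saturation facts hidden inside $\langof\varphi$: I only have to check that each term-level construction of Fig.~\ref{fig:transl} has the same language effect as the matching automata operation ($\cap$, $\cup$, complement, projection\,+\,subset\,+\,derivative) of \secref{sec:treeaut_cons}, and this is exactly what \Cref{lem:sem} provides.

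For the base case $\varphi=\varphi_0$ atomic, $\termof{\varphi_0}=\{I_{\varphi_0}\}$; by \Cref{lem:sem}\ref{sem:eq:deter} this has the language of $\autof{I_{\varphi_0}}$, and unfolding the transition and root rules \eqref{eq:deltaQ} and \eqref{eq:rtQ} shows that $\autof{I_{\varphi_0}}$ is just the base automaton $\autof{\varphi_0}$ (restricted to its reachable part), whose language is $\langof{\autof{\varphi_0}}$. In the inductive step I peel off the outermost connective of $\termof\varphi=\{\translof\varphi\}$ using Fig.~\ref{fig:transl} together with the corresponding equation of \Cref{lem:sem}: parts \ref{sem:eq:conj}/\ref{sem:eq:disj}/\ref{sem:eq:compl} for $\wedge$/$\vee$/$\neg$, and \ref{sem:eq:deter}, \ref{sem:eq:deriv}, \ref{sem:eq:proj} for $\exists X$, where $\termof{\exists X.\psi}=\{\{\pi_X(\translof\psi)\}-\allzerotrees\}$ is built by wrapping a derivative term around a set term around the projection. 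Each such equation rewrites $\semanreasonable{\termof\varphi}$ as the $\cap$/$\cup$/complement/$\pi_X(\cdot)-\allzerotrees$ combination of the $\semanreasonable{\termof{\varphi_i}}$, which by the induction hypothesis equals the same combination of the $\langof{\autof{\varphi_i}}$, i.e.\ exactly $\langof{\autof\varphi}$ by the definition of the classical operations.

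The main obstacle is that \Cref{lem:sem} requires its operands to be \emph{automata terms} (set or derivative terms), whereas the Boolean rules of Fig.~\ref{fig:transl} produce bare state terms: in $\translof{\varphi_1\wedge\varphi_2}=\translof{\varphi_1}\conj\translof{\varphi_2}$, and likewise under the $\pi_X$ of rule \eqref{eq:transE}, the operands $\translof{\varphi_i}$ are themselves $\conj$/$\disj$/$\overline{\,\cdot\,}$-terms, for which $\semanreasonable{\cdot}$ is not even defined. I would close this gap with a \emph{determinisation-insertion} step showing that wrapping an operand in set brackets is language-neutral inside any context, e.g.\ $\semanreasonable{\{t\conj u\}}=\semanreasonable{\{\{t\}\conj\{u\}\}}$ and analogously for the other connectives and for $\pi_X$; after wrapping, the operands become the automata terms $\termof{\varphi_i}=\{\translof{\varphi_i}\}$ and \Cref{lem:sem} applies verbatim. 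Proving this congruence is where the real difficulty sits, and it rests on the invariant that every $\translof\varphi$ denotes a \emph{complete deterministic} automaton---true because the only nondeterminism-introducing operation, projection, is in rule \eqref{eq:transE} always immediately enclosed in a subset construction $\{\cdot\}$, while products and complements preserve determinism and completeness. I expect the complement case to be the most delicate, since the root-flipping rules \eqref{eq:rtnot} and \eqref{eq:deltanot} compute the complement correctly only for a complete DFA; this is precisely the extra subtlety over the word case of \cite{gaston} that the paper warns about.
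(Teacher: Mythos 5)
Your proposal is correct and takes essentially the same route as the paper's own proof: structural induction on $\varphi$, with \Cref{lem:sem} handling the language-level effect of each connective, and the state-term-versus-automata-term gap closed by bracket-insertion congruences (e.g.\ $\semanreasonable{\{t\conj u\}}=\semanreasonable{\{\{t\}\conj\{u\}\}}$) justified by the determinism and completeness of translated terms---precisely what the paper formalizes via its notion of deterministic terms, using \Cref{lem:concise-conj} for the $\conj$/$\disj$ cases and determinism-based claims for $\overline{\,\cdot\,}$ and $\pi_X$. The only cosmetic difference is that you route through $\langof{\autof\varphi}$ and classical correctness, whereas the paper equates $\semanreasonable{\termof\varphi}$ with $\langof{\varphi}$ directly while invoking the same properties of the classical procedure.
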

%
The proof of \Cref{lem:languages_match} uses structural induction, which
is greatly simplified by \Cref{lem:sem},
but since \Cref{lem:sem} does not (and cannot, as discussed above) cover all used types of terms,
the induction step must in some cases still rely on reasoning about the definition of the transition relation on terms.


%
\section{An Efficient Decision Procedure}\label{sec:algorithm}

%
%
The development in \secref{sec:terms} already implies a na\"{\i}ve automata
term-based satisfiability check.
Namely, by \Cref{lem:languages_match}, we know that a~formula~$\varphi$ is satisfiable iff $\langofreasonable{\autof{\termof\varphi}} \neq \emptyset$.
After translating $\varphi$ into $\termof\varphi$ using rules (\ref{eq:trans0})--(\ref{eq:transE}),
we may use the definitions of the transition function and root states of $\autof{\termof\varphi} =
(Q,\Delta,I,F)$ in \secref{sec:terms} to decide the language emptiness through
evaluating the root state test $\rt(\reach_\Delta(I))$. It is enough to implement the equalities and
equivalences (\ref{eq:deltaor})--(\ref{eq:roots}) as recursive functions.
We will further refer to this algorithm as the \emph{simple recursion}.
The evaluation of $\reach_\Delta(I)$ induces nested evaluations of the fixpoint
(\ref{eq:semric}): the one on the top level of the language emptiness test and
another one for every expansion of a~derivative sub-term.
The termination of these fixpoint computations is guaranteed
due to \Cref{lemma:finite}.

Such a~na\"{\i}ve implementation is, however, inefficient and has only disadvantages in comparison to the classical decision procedure.
In this section, we will discuss how it can be optimized.
Besides an essential \emph{memoization} needed to implement the recursion efficiently,
we will show that the automata term representation is amenable to optimizations that cannot be used in the classical construction.
These are techniques of state space pruning:
the fact that the emptiness can be tested on the fly during the automata
construction allows one to avoid exploration of state space irrelevant to the test.
The pruning is done through the techniques of \emph{lazy evaluation} and \emph{subsumption}.
We will also discuss optimizations of the transition function of \secref{sec:terms} through \emph{product flattening} and \emph{nondeterministic union},
which are analogies to standard implementations of automata intersection and union.

\subsection{Memoization}\label{sec:lazy}


The simple recursion repeats the fixpoint computations that saturate derivative
terms from scratch at every call of the transition function or root test.
%
%
This is easily countered through \emph{memoization}, known, e.g., from compilers of
functional languages, which caches results of function calls
in order to avoid their re-evaluation.
Namely, after saturating a~derivative sub-term
$t = S - \allzerotrees$ of $\termof\varphi$ for the first time,
we simply \emph{replace} $t$ in $\termof\varphi$ by the saturation $\saturof t =\reach_{\Delta_{\zerosymb}}(S)$.
Since a derivative is a symbolic representation of its saturated version,
the replacement does not change the language of $\termof\varphi$.
Using memoization, every fixpoint computation is then carried out once only.


\subsection{Lazy Evaluation}

The \emph{lazy} variant of the procedure
uses \emph{short-circuiting} to optimize connectives~$\land$ and~$\lor$, and
\emph{early termination} to optimize fixpoint computation in derivative saturations.
Namely, assume that we have a~term $\termof 1 \disj \termof 2$ and that we test
whether~$\rt(\termof 1 \disj \termof 2)$.
Suppose that we establish
that $\rt(\termof 1)$; we can \emph{short circuit} the evaluation and
immediately return $\true$,
completely avoiding touching the~potentially complex term~$\termof 2$
(and analogously for a~term of the form~$\termof 1 \conj \termof 2$ when one branch
is~$\false$).
%

Furthermore, \emph{early termination} is used to optimize fixpoint computations used to saturate derivatives within tests $\rt(S - \allzerotrees)$ (obtained from sub-formulae such as~$\exists X.\ \psi$).
Namely, instead of first unfolding the whole fixpoint
into a set $\{\term_1, \ldots \term_n\}$
and only then testing whether $\rt(\term_i)$ is true for some $\term_i$,
the terms $t_i$ can be tested as soon as they are computed, and the fixpoint computation can be stopped early,
immediately when the test succeeds on one of them.
%
%
Then, instead of replacing the derivative sub-term by its full saturation,
we replace it by the partial result $\{t_1,\ldots,t_i\}-\allzerotrees$ for $i
\leq n$.
Finishing the evaluation of the fixpoint computation might later be required in order to compute a transition from the derivative.
We note that this corresponds to the concept of~\emph{continuations} from functional programming,
used to represent a~paused computation that may be required to continue later.
\begin{example}
Let us now illustrate the lazy decision procedure on our running example formula
$\varphi \equiv \neg\exists X.\ \singof X \wedge X = \{ \epsilon \}$
and the corresponding automata term
$
 	\termof\varphi = \bigl\{\,\complof{
			\{
				\projof X {\{q_0\} \conj \{p_0\}}
			\}
			- \allzerotrees
 	}\,\bigr\}
$ from Example~\ref{ex:term}.
The task of the procedure is to compute the value of $\rtof{\reach_\Delta(\termof\varphi)}$,
i.e., whether there is a root state reachable from the leaf state~$\translof\varphi$ of~$\autof{\termof\varphi}$.
The fact that $\varphi$ is ground allows us to slightly simplify the problem
because any ground formula $\psi$ is satisfiable iff $\bot \in \langof\psi$,
i.e., iff the leaf state~$\translof\psi$
of~$\autof{\termof\psi}$ is also a~root.
It is thus enough to test
$\rtof{\translof\varphi}$
where $\translof\varphi =
 	\complof{
			\{
				\projof X {\{q_0\} \conj \{p_0\}}
			\}
			- \allzerotrees
 	 }
$.

The computation proceeds as follows.
First, we use~\eqref{eq:rtnot} from Fig.~\ref{fig:root} to propagate the root test towards the derivative, i.e., to obtain that
$\rtof{\translof\varphi}$ iff $\neg\rtof{\{\projof X {\{q_0\} \conj \{p_0\}} \}
- \allzerotrees}$.
Since the $\rt$-test cannot be directly evaluated on a~derivative term, we need
to start saturating it into a~set term, evaluating~$\rt$ on the fly, hoping for
early termination.
%
We begin with evaluating the $\rt$-test on the initial element $\termof 0 =
\projof X {\{q_0\} \conj \{p_0\}}$ of the set.
The test propagates through the projection $\proj_X$ due to \eqref{eq:rtpi} and evaluates as $\false$ on the left conjunct
(through, in order, \eqref{eq:rtand}, \eqref{eq:rtset}, and \eqref{eq:rtQ})
since the state $q_0$ is not a~root state.
As a~trivial example of short circuiting, we can skip evaluating~$\rt$ on the
right conjunct~$\{p_0\}$ and conclude that $\rtof{\termof 0}$ is $\false$.

The fixpoint computation then continues with the first iteration, computing the $\zerosymb$-successors of the set~$\{\termof 0\}$.
%
We will obtain
$\Transof {\termof 0} {\termof 0} {\zerosymb} = \{\termof 0, \termof 1\}$ with
$\termof 1 = \projof X {\{q_1\} \conj \{p_1\}}$.
%
%
The test~$\rtof{\termof 1}$ now returns $\true$ because both~$q_1$ and~$p_1$ are
root states.
With that, the fixpoint computation may terminate early, with the $\rt$-test on
the derivative sub-term returning $\true$.
Memoization then replaces the derivative sub-term in $\translof\varphi$ by the partially evaluated version
$
			\{
				t_0,t_1
			\}
			- \allzerotrees
$,
and $\rtof{\translof\varphi}$ is evaluated as $\false$ due to~\eqref{eq:rtnot}.
We therefore conclude that~$\varphi$ is unsatisfiable (and invalid since it is ground).
%
%
%
\qed
\end{example}

\subsection{Subsumption}\label{sec:subsum}


The next technique we use is based on pruning out parts of a~search space that
are \emph{subsumed} by other parts.
In particular, we generalize (in a~similar way as we did for WS1S in our
previous work~\cite{gaston}) the concept used in \emph{antichain} algorithms for
efficiently deciding language inclusion and universality of finite word and tree
automata~\cite{doyen:antichain,wulf:antichains,bouajjani:antichain,abdulla:when}.
Although the problems are in general computationally infeasible (they are
$\pspace$-complete for finite word automata and $\exptime$-complete for finite
tree automata), antichain algorithms can solve them efficiently in many
practical cases.

We apply the technique by keeping set terms in the form of antichains of
\emph{simulation-maximal} elements and prune out any other simulation-smaller
elements.
Intuitively, the notion of a~term~$\term$ being simulation-smaller than $\term'$
implies that trees that might be generated from the leaf states~$T \cup
\{\term\}$ can be generated from $T \cup\{\term'\}$ too, hence discarding
$\term$ does not hurt.
Formally, we introduce the following rewriting rule:
%
\begin{align}
   \{t_1,t_2,\ldots,t_n\} \rwr \{t_2,\ldots,t_n\} \qquad \text{ for } t_1 \subsum t_2 ,
  \label{rw:subsumption}
\end{align}
which may be used to simplify set sub-terms of automata terms.
The rule~\eqref{rw:subsumption} is applied after every iteration of the
fixpoint computation on the current partial result. Hence the sequence of
partial results is monotone, which, together with the finiteness of $\reachof t
\Trans$, guarantees termination.
%
The \emph{subsumption} relation~$\subsum$ used in the rule is defined
in~Fig.~\ref{fig:subsumption} where $S \subsumfe S'$ denotes $\forall \term
\in~S\ \exists \term' \in~S'.\, \term \subsum~\term'$.
Intuitively, on base TAs, subsumption 
%
%
\begin{wrapfigure}[9]{r}{5.9cm}
\vspace*{-10mm}
\hspace*{-3mm}
\begin{minipage}{6.2cm}
\begin{align}
  \label{eq:subsum_subset}
  S &\subsum S'                          \hspace*{-4mm}& {} \defequiv {}&S \subseteq S' \lor S \subsumfe S' \\
	t \conj u &\subsum t' \conj u' \hspace*{-4mm}& {} \defequiv {}&t \subsum t' \land u \subsum u' \\
  t \disj u &\subsum t' \disj u' \hspace*{-4mm}& {} \defequiv {}&t \subsum t' \land u \subsum u' \\
  \complof{t} &\subsum \complof{t'}      \hspace*{-4mm}& {} \defequiv {}&t' \subsum t \\
	\pi_X(t) &\subsum \pi_X(t')            \hspace*{-4mm}& {} \defequiv {}&t \subsum t'
\end{align}
\end{minipage}
\vspace{-2mm}
\caption{The subsumption relation~$\subsum$}
\label{fig:subsumption}
\end{wrapfigure}
%
corresponds to inclusion of
the set terms
(the left disjunct of~\eqref{eq:subsum_subset}).
This clearly has the intended outcome: a~larger set of states can
always
simulate a~smaller set in accepting a~tree.
The rest of the definition is an inductive extension of the base case.
It can be shown that $\subsum$ for any automata term $\term$
is an upward simulation on $\autof{\term}$ in the sense of~\cite{abdulla:when}.
Consequently,
rewriting sub-terms in an automata term according to the new
rule~\eqref{rw:subsumption} does not change its language.
Moreover, the fixpoint computation interleaved with
application of rule~\eqref{rw:subsumption} terminates.

\subsection{Product Flattening}\label{sec:prod_flat}

\emph{Product flattening} is a~technique that we use to
reduce the size of fixpoint saturations that generate conjunctions and disjunctions of sets as their elements.
Consider a~term of the form $D = \{\projof X {S_0 \conj S_0'}\} - \allzerotrees$ for
a~pair of sets of terms~$S_0$ and~$S_0'$
where the TAs~$\autof{S_0}$ and~$\autof{S_0'}$ have sets of states
$Q$ and $Q'$, respectively.
The saturation generates the set $\{\projof X {S_0 \conj S'_0},\ldots,\projof X
{S_n \conj S'_n}\}$ with  $S_i\subseteq Q,S'_i\subseteq Q'$ for all $0\leq i \leq n$.
The size of this set is $2^{|Q| \cdot |Q'|}$ in the worst case.
%
%
In terms of the automata operations,
this fixpoint expansion corresponds to
first determinizing both~$\autof{S_0}$ and~$\autof{S_0'}$ and only then using the
product construction (cf.~\secref{sec:prelims}).
%
The automata intersection, however, works for nondeterministic automata
too---the determinization is not needed.
%
%
%
Implementing this standard product construction on terms would mean
transforming the original fixpoint
above into the following fixpoint with~a \emph{flattened product}:
$D = \{\projof X {S \timesof{\conj} S'}\} - \allzerotrees$ where
$\timesof{\conj}$ is the augmented product for conjunction.
This way, we can decrease the worst-case size of the fixpoint to
$|Q|\cdot|Q'|$.
A~similar reasoning holds for terms of the form $\{\projof X {S_0 \disj S_0'}\} -
\allzerotrees$.
Formally, the technique can be implemented by the following pair of
sub-term rewriting rules
where~$S$ and~$S'$ are non-empty sets of terms:
\begin{multicols}{2}
\noindent
\begin{equation}
S \disj S' \rwr S \timesof\disj S',
\label{eq:flat_disj}
\end{equation}
\columnbreak
\begin{equation}
S \conj S' \rwr S \timesof\conj S'.
\label{eq:flat_conj}
\end{equation}
\end{multicols}
\vspace{-7mm}
\noindent
Observe that for terms obtained from WS2S formulae using the translation from \secref{sec:terms},
the rules are not really helpful as is.
Consider, for instance, the term $\{\projof X {\{r\} \conj \{q\}}\} -
\allzerotrees$ obtained from a~formula $\exists X.\, \varphi \land \psi$ with
$\varphi$ and $\psi$ being atoms.
The term would be, using rule~\eqref{eq:flat_conj}, rewritten into the term
$\{\projof X {\{r \conj q\}}\} - \allzerotrees$.
Then, during a subsequent fixpoint computation, we might obtain a~fixpoint of the
following form:
$\{\projof X {\{r \conj q\}}, \projof X {\{r \conj q, r_1 \conj q_1\}},$ $\projof
X {\{r_1 \conj q_1, r_2 \conj q_2\}}\}$, where the occurrences
of the projection $\proj_X$ disallow one to perform the desired union of the inner
sets, and so the application of rule~\eqref{eq:flat_conj} did not help.
We therefore need to equip our procedure with a~rewriting rule that can be used to
push the projection inside a~set term~$S$:
\vspace{-1mm}
\begin{equation}
  \projof X S \rwr \{\projof X t \mid t \in S\}.
  \label{eq:flat_proj}
\vspace{-1mm}
\end{equation}
In the example above, we would now obtain the term $\{\projof X {r \conj q}\} -
\allzerotrees$ (we rewrote $\{\{\cdot\}\}$ to $\{\cdot\}$ as mentioned in
\secref{sec:terms}) and the fixpoint
$\{\projof X {r \conj q}, \projof X {r_1 \conj q_1}, \projof
X {r_2 \conj q_2}\}$.
The correctness of the rules is guaranteed by the following lemma:
%
\setcounter{savedcount}{\value{equation}}
\setcounter{equation}{0}
\renewcommand{\theequation}{{\it\alph{equation}}}
\begin{restatable}{lemma}{lemmaProductFlat}
  \label{lem:concise-conj}
  For sets of terms $S$ and $S'$ such that $S\neq\emptyset$ and
  $S'\neq\emptyset$, we have:
\begin{multicols}{2}
\noindent
  \begin{equation}
     \seman{\{ S \disj S' \}} = \seman{ \{S \timesof\disj S' \}},
  \end{equation}
	\begin{equation}
    \seman{\{ S \conj S' \}} = \seman{ \{S \timesof\conj S' \}},
  \end{equation}
\columnbreak
	\begin{equation}
		\seman{\{ \pi_X(S) \}} = \seman{ \{ \pi_X(t)~|~t\in S \}}.
	\end{equation}
\end{multicols}
\end{restatable}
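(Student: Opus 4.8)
The plan is to prove each of the three equalities by unfolding the term semantics from Section~\ref{sec:terms} and showing that the two sides denote automata with the same language, leaning on Lemma~\ref{lem:sem} wherever the sub-terms are already set terms. For the conjunction equality, observe first that by Lemma~\ref{lem:sem}\ref{sem:eq:conj} the left-hand side $\seman{\{S \conj S'\}}$ equals $\seman{S}\cap\seman{S'}$, and by Lemma~\ref{lem:sem}\ref{sem:eq:deter} the right-hand side $\seman{\{S\timesof\conj S'\}}$ equals $\seman{S\timesof\conj S'}$. So it suffices to show that the set term $S\timesof\conj S' = \{t\conj t' \mid t\in S, t'\in S'\}$ induces an automaton whose language is exactly $\seman{S}\cap\seman{S'}$. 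The natural route is to show that $\autof{S\timesof\conj S'}$ is precisely (a reachable restriction of) the product automaton $\autof{S}\cap\autof{S'}$ from Section~\ref{sec:treeaut_cons}: its leaf states are the pointwise products $t\conj t'$, its transitions by~\eqref{eq:deltaand} and~\eqref{eq:deltaQ} compute the augmented product of the component transitions, and its root predicate by~\eqref{eq:rtand} is the conjunction of the component root predicates. Since the classical product construction is known to satisfy $\langof{\A\cap\A'}=\langof{\A}\cap\langof{\A'}$, and since $\seman{S}=\langof{\autof S}$, $\seman{S'}=\langof{\autof{S'}}$ by definition, this closes the conjunction case. The disjunction case is entirely dual, using~\eqref{eq:deltaor}, the $\cup$-root rule~\eqref{eq:rtor}, Lemma~\ref{lem:sem}\ref{sem:eq:disj}, and the correctness of the product-union construction.

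For the projection equality, I would again start from Lemma~\ref{lem:sem}\ref{sem:eq:deter} to strip the outer brackets on both sides, reducing the goal to $\seman{\{\pi_X(S)\}} = \seman{\{\pi_X(t)\mid t\in S\}}$, and then use Lemma~\ref{lem:sem}\ref{sem:eq:proj} on the left to get $\pi_X(\seman{S})$. The heart of the matter is that projection and subset construction \emph{commute up to language}: the left-hand automaton first determinizes $S$ (via the outer set brackets) and then projects, while the right-hand automaton projects each state of $S$ individually before the determinization induced by the surrounding bracket. I would argue that both yield an automaton recognizing $\pi_X(\seman S)$ by appealing to the transition rules~\eqref{eq:deltapi} and~\eqref{eq:deltaset}: pushing $\pi_X$ inside the set term and then determinizing produces the same reachable root-labelled state terms, modulo the annotation, as determinizing and then projecting. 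Here one uses that $\pi_X$ relabels symbols uniformly and thus commutes with the union taken in the subset-construction transition~\eqref{eq:deltaset}.

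The main obstacle I anticipate is the projection case, and specifically the subtle interaction between determinization and projection. In general, for nondeterministic automata, projection followed by subset construction does \emph{not} commute with subset construction followed by projection at the level of automata, only at the level of languages; the argument must therefore be carried out on languages (via Lemma~\ref{lem:sem}) rather than by a direct state-term isomorphism, and one must be careful that the $\zerosymb$-saturation is not yet in play (the rule~\eqref{eq:flat_proj} acts on the set term \emph{inside} a derivative term, so I would verify that the equality of the inner set terms' languages lifts correctly through the surrounding derivative via Lemma~\ref{lem:sem}\ref{sem:eq:deriv}). For the conjunction and disjunction cases, the only delicate point is confirming that the annotated product state terms $t\conj t'$ are reachable exactly when the corresponding product-automaton states are reachable, so that no spurious or missing root states arise; this should follow by a routine induction on runs using~\eqref{eq:deltaand}/\eqref{eq:deltaor}, but it is the place where the bookkeeping must be done honestly.
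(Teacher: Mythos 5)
Your treatment of the two product cases is correct and, in substance, the same as the paper's. The paper first applies Lemma~\ref{lem:sem}~\eqref{sem:eq:disj} (resp.~\eqref{sem:eq:conj}) to rewrite $\seman{\{S \disj S'\}}$ as $\seman{S}\cup\seman{S'}$, and then relates this to $\seman{\{S \timesof\disj S'\}}$ by an explicit run construction: runs $\rho_1,\rho_2$ of $\autof{\expt{S}}$ and $\autof{\expt{S'}}$ on the same tree (both exist because $\Delta$ is total) combine pointwise into the run $w\mapsto\rho_1(w)\disj\rho_2(w)$ of the flattened-product automaton, and conversely every run of the flattened product decomposes into component runs. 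Your route---identifying $\autof{S\timesof\conj S'}$ with the classical product automaton via the bijection between $(u,v)$ and $u\conj v$ on reachable states and then citing the correctness of the product construction---packages exactly this argument into a citation and is legitimate, provided you make explicit that the term automata are complete (the implicit $\{\emptyset\}$ sink rule of \secref{sec:terms}), since completeness is what the union-product correctness, and the paper's ``both runs exist'' step, rely on.

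The projection case, however, has a genuine gap, and you put your finger on it without resolving it. You correctly observe that there is no state-level isomorphism between the two sides (after determinization-then-projection the left automaton still has one successor per symbol $b\in\pi_X(a)$, which projection-then-determinization would merge), and you conclude that the argument ``must be carried out on languages (via Lemma~\ref{lem:sem})''. But Lemma~\ref{lem:sem} is not applicable to the right-hand side: rule~\eqref{sem:eq:proj} concerns terms $\{\pi_X(\aterm_1)\}$ in which $\pi_X$ wraps an \emph{automata} term, whereas in $\{\pi_X(t) \mid t\in S\}$ the projection wraps arbitrary state terms $t$ (conjunctions, complements, \dots) whose isolated semantics $\seman{t}$ is not even defined---this is precisely the non-compositionality issue the paper stresses. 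For the same reason, your opening move of ``stripping the outer brackets on both sides'' via~\eqref{sem:eq:deter} is unavailable on the left, since $\pi_X(S)$ is not an automata term (though your subsequent use of~\eqref{sem:eq:proj} there is fine). What is missing is the one concrete argument that closes the case: a direct run correspondence showing $\seman{\{\pi_X(t)\mid t\in S\}} = \pi_X(\seman{S})$. Concretely, an accepting run $\rho$ of $\autof{\expt{S}}$ on a tree $\tau'$ induces the accepting run $w\mapsto\pi_X(\rho(w))$ of $\autof{\{\pi_X(t)\mid t\in\expt{S}\}}$ on every $\tau\in\pi_X(\tau')$---using rule~\eqref{eq:deltapi} and the symmetry $a\in\pi_X(b)\Leftrightarrow b\in\pi_X(a)$ of the projection relation on symbols---and conversely every accepting run of the right-hand automaton on $\tau$ arises in this way from some $\tau'$ with $\tau\in\pi_X(\tau')$. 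This is exactly how the paper proves case (c); without it, your plan for the projection equality does not go through. (Your side worry about the surrounding $\zerosymb$-derivative is beside the point for this lemma, which is stated for bare set terms; the interaction with fixpoints is handled separately in \secref{sec:prod_flat}.)
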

\vspace{-7mm}

\setcounter{equation}{\value{savedcount}}
\renewcommand{\theequation}{\arabic{equation}}

However, we still have to note that there is a danger related with the
rules~\eqref{eq:flat_disj}--\eqref{eq:flat_proj}.
Namely, if they are applied to some terms in a~partially evaluated fixpoint but
not to all, the form of these terms might get different (cf. $\projof X {\{r
\conj q\}}$ and $\projof X {r \conj q}$), and it will not be possible to combine
them as source states of TA transitions when computing~$\Trans_a$, leading thus
to an incorrect result.
%
%
We resolve the situation such that we apply the rules as a pre-processing step
only before we start evaluating the top-level fixpoint, which ensures that all
terms will subsequently be generated in a compatible form.


\vspace{-1mm}
\subsection{Nondeterministic Union}\label{sec:nondetUnion}
\vspace{-1mm}

Optimization of the product term saturations from the previous section can be
pushed one step further for terms of the form $\{\projof X {S \disj S'}\} - \allzerotrees$.  
%
The idea is to use
the \emph{nondeterministic  TA union} to implement the union operation instead of the product construction.
%
The TA union is implemented as the component-wise union of the two TAs.
Its size is hence linear to the size of the input instead of quadratic as in the
case of the product (i.e., $|Q| + |Q'|$ instead of $|Q|\cdot |Q'|$).
To work correctly, the nondeterministic union requires disjoint input sets of states
(otherwise, the combination of the two transition functions could generate runs
that are not possible in either of the input~TAs).
We~implement the nondeterministic union through the following rewriting rule:
\vspace{-1mm}
\begin{equation}
  S \disj S' \rwr S \cup S' \qquad \text{for } S \not\interf S'
  \label{eq:nondet_union}
\vspace{-1mm}
\end{equation}
%
%
where~$S$ and~$S'$ are sets of terms
(similarly to \secref{sec:prod_flat}, in order to successfully reduce the
fixpoint state space on terms obtained from WS2S formulae, we also need to apply
%
%
\begin{wrapfigure}[12]{r}{7.5cm}
\vspace*{-11mm}
\hspace*{-3mm}
\begin{minipage}{7.8cm}
\begin{align}
  S \interf{}            & S'            \hspace*{-4mm}&  & {}\defequiv S = S' \lor \exists t\in S, t'\in S'.\, t\interf t' \\
  t \conj u \interf{}  & t'\conj u' \hspace*{-4mm}&  & {}\defequiv t \interf t' \lor u\interf u' \\
	t \disj u \interf{}  & t'\disj u' \hspace*{-4mm}&  & {}\defequiv t \interf t' \lor u\interf u' \\
	\overline{t} \interf{} & \overline{t'} \hspace*{-4mm}&  & {}\defequiv t \interf t' \\
  \projof X t \interf{}  & \projof X {t'}  \hspace*{-4mm}&  & {}\defequiv t \interf t' \\
  D \interf{}            & t            \hspace*{-4mm}&  & {}\defequiv \saturof{D} \interf t \\
  t \interf{}            & D            \hspace*{-4mm}&  & {}\defequiv t \interf \saturof{D} \\
  q \interf{}            & r            \hspace*{-4mm}&  & {}\defequiv  \exists 1\leq k \leq n.\, q,r \in Q_k
\end{align}
\end{minipage}
\vspace{-3mm}
\caption{Definition of interference~$\interf$}
\label{fig:interference}
\end{wrapfigure}
%
%
rule~\eqref{eq:flat_proj} to push projection inside set terms).
The relation $\interf$ used in the rule is the \emph{interference}
of terms,
defined in Fig.~\ref{fig:interference}, which generalizes the state space disjointness requirement of the
nondeterministic union of TAs.
Interference between terms tells us when we cannot perform the rewriting.
Intuitively, this happens when we obtain a~term
$\{S \disj S'\}$ where~$S$ and~$S'$ contain states from the same base
automaton $\B_k$
with the set of states~$Q_k$.

In order to avoid interference in the terms obtained from WS2S formulae,
we can perform the following pre-processing step:
When translating a~WS2S formula~$\varphi$ into a~term~$\termof \varphi$,
we create a~special version of a~base TA for every occurrence of
an atomic formula in~$\varphi$.
This way, we can never mix up terms that emerged from different subformulae to
enable a~transition that would otherwise stay disabled.

%
To use rule~\eqref{eq:nondet_union}, it is necessary to modify treatment of the
sink state $\emptyset$ in the definition of $\Delta$ of \secref{sec:terms}.
The technical difficulty we need to circumvent is that  (unlike for finite
word automata) the nondeterministic union of two (even complete) TAs is not complete.

This can cause situations such as the following:
let $D = \{\projof X {\{\complof{t}\} + \{\complof{r}\}}\} - \allzerotrees$
such that
$\Transof {t} {t} \zerosymb = \{t\}$, $\Transof {r} {r} \zerosymb = \{r\}$,
and $\rt(t)$ and $\rt(r)$ are both $\true$, i.e., both~$t$ and~$r$ can
accept any $\zerosymb$-tree, which also means that the union of their
complements should not accept any $\zerosymb$-tree.
Indeed, the saturation of~$D$ is the set term
$\saturof D = \reach_{\zerosymb}(\{\projof X {\{\complof{t}\} +
\{\complof{r}\}}\}) = \{\projof X {\{\complof{t}\} + \{\complof{r}\}}\}$
where it holds that
$\neg\rt(\projof X {\{\complof{t}\} + \{\complof{r}\}})$,
i.e., it does not accept any $\zerosymb$-tree.
On the other hand, if we use the new rule~$\eqref{eq:nondet_union}$
together with rule~\eqref{eq:flat_proj},
we obtain the term
$\{\projof X {\complof{t}}, \projof X {\complof{r}}\} - \allzerotrees$.
When computing its saturation, we will obtain a~new element
$\Transof{\projof X {\complof t}}
{\projof X {\complof r}} {\zerosymb} =
\projof X {\complof \emptyset}$.
The term $\projof X {\complof{\emptyset}}$ was constructed using
the implicit rule of \secref{sec:terms} that sends the otherwise undefined
successors of a~pair of terms to~$\{\emptyset\}$.
Note that~$\rt(\projof X {\complof \emptyset})$ is $\true$, yielding that the
fixpoint approximation
$\{\projof X {\complof t}, \projof X {\complof r}, \projof X {\complof{\emptyset}}\}$ is a~root
state, so a~$\zerosymb$-tree is accepted.
Therefore, the application of the new rule~\eqref{eq:nondet_union} changed the language.

Although the previous situation cannot happen with terms obtained from WS2S
formulae using the translation rules from \secref{sec:terms}, in order to
formulate a~correctness claim for any terms constructed using our grammar, we
remedy the issue by modifying the definition of implicit transitions of~$\Delta$
to~$\{\emptyset\}$ from \secref{sec:terms}.
Namely, the modified transition function~$\Trans_a(t_1,t_2)$ will return the
same value as before if $t_1\interf t_2$,
and otherwise it will return $\{\emptyset\}$.
We will denote the modified transition functon as~$\Delta'$ and the
corresponding semantics of a~term~$\term$ obtained using~$\Delta'$ instead
of~$\Delta$ as~$\langpof \term$
(\Cref{lem:sem,lem:concise-conj} and \Cref{lem:languages_match} could
be proved similarly with the new definition of semantics).
With these new versions of~$\Delta'$ and~$\langp$,
we can show correctness of the rule:
%
%
%
%
\begin{restatable}{lemma}{lemmaNondetUnion}
\label{lem:nondet-union}
	Let $S,S'$ be sets of terms s.t.~$S\not\interf S'$.
  Then
	$
		\langpof{\{ S \disj S' \}} = \langpof{S \cup S'}.
	$
\end{restatable}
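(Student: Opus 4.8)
The plan is to reduce the statement to a claim purely about the nondeterministic‑union automaton, and then to prove that claim by a run‑surgery argument whose engine is the fact that \emph{interference is preserved by the transition function} $\Delta'$.

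First I would rewrite the left‑hand side using the disjunction case \eqref{sem:eq:disj} of \Cref{lem:sem}, in its $\langp$‑version (the paper notes that \Cref{lem:sem} carries over verbatim to $\Delta'$ and $\langp$): $\langpof{\{S \disj S'\}} = \langpof{S}\cup\langpof{S'}$. It thus remains to show $\langpof{S \cup S'} = \langpof{S}\cup\langpof{S'}$, i.e.\ that the nondeterministic union $\autof{S\cup S'}$ (leaf states $S\cup S'$, transition function $\Delta'$, roots given by $\rt$) accepts exactly $\langpof S\cup\langpof{S'}$. The inclusion $\supseteq$ is immediate: since $\Delta'$ and $\rt$ do not depend on the chosen leaf‑state set and $S,S'\subseteq S\cup S'$, every accepting run of $\autof S$ (resp.\ $\autof{S'}$) is literally an accepting run of $\autof{S\cup S'}$, so $\reach_{\Delta'}(S),\reach_{\Delta'}(S')\subseteq\reach_{\Delta'}(S\cup S')$ transport acceptance upward.

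The non‑trivial inclusion $\subseteq$ rests on the following \emph{Key Lemma}: if $S\not\interf S'$ then no term of $\reach_{\Delta'}(S)$ interferes with any term of $\reach_{\Delta'}(S')$. I would prove it by induction on the reachability derivation, with each step reduced to a structural statement, itself proved by induction on the shape of $t$:
$(\star)$ for any terms $u,v,w$, symbol $a$, and $t\in\Delta'_a(u,v)$ with $t\neq\emptyset$, if $u\not\interf w$ and $v\not\interf w$ then $t\not\interf w$.
Intuitively a non‑sink successor inherits only the (position‑aware) base content of its two predecessors, so disjointness from both is inherited. The base case of $(\star)$ is atomic states: a non‑sink transition $\delta^\B_a(q,r)$ exists only when $q,r$ lie in a common base automaton $Q_k$, in which case its result lies in $Q_k$ as well, so disjointness of base indices is preserved; the $\conj,\disj,\pi_X,\complof{\,\cdot\,}$ cases follow because $\Delta'$ and $\interf$ act homomorphically on the shared top‑level connective. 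Given $(\star)$, the Key Lemma follows: for $t\in\reach_{\Delta'}(S)$ obtained from $u,v\in\reach_{\Delta'}(S)$, the induction hypothesis gives $u\not\interf t'$ and $v\not\interf t'$ for every $t'\in\reach_{\Delta'}(S')$, and $(\star)$ yields $t\not\interf t'$; the base case $t\in S$ is the hypothesis $S\not\interf S'$ (symmetrically inducting on the derivation of $t'$).

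Finally, I would use the Key Lemma for run surgery. Take an accepting run $\rho$ of $\autof{S\cup S'}$ on a tree $\tree$. As $\emptyset$ is a non‑accepting absorbing sink (any transition touching it returns $\{\emptyset\}$, and $\rtof{\emptyset}$ is false), no node of $\rho$ maps to $\emptyset$. By the Key Lemma each non‑sink reachable term lies in exactly one of $\reach_{\Delta'}(S),\reach_{\Delta'}(S')$ (it cannot lie in both, since every non‑sink term interferes with itself), so $\rho$ colours each node $S$ or $S'$. At an internal node with differently‑coloured children, the Key Lemma makes the two children non‑interfering, whence $\Delta'$ sends them to $\{\emptyset\}$, contradicting non‑sinkness; thus both children always share the parent's colour and $\rho$ is monochromatic. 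If the colour is $S$, every state lies in $\reach_{\Delta'}(S)$ and every (non‑sink) leaf state, being in $S\cup S'$ but $S$‑coloured, lies in $S$, so $\rho$ is an accepting run of $\autof S$ and $\tree\in\langpof S$; symmetrically for $S'$. I expect the main obstacle to be the careful proof of $(\star)$, especially the set‑term case—where $\Delta'$ collapses set terms into one large union and $\interf$ carries the extra $S=S'$ clause that must be excluded for non‑sink terms—together with tracking that non‑sink transitions force structurally matching operands; this is precisely the subtlety that fails for the unmodified $\Delta$, as the $\complof t/\complof r$ example motivating $\Delta'$ shows.
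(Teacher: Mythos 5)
Your overall route coincides with the paper's own proof: reduce the left-hand side via the $\langp$-version of \Cref{lem:sem} to $\langpof{S}\cup\langpof{S'}$, obtain one inclusion by observing that accepting runs of $\autof{\expt{S}}$ or $\autof{\expt{S'}}$ are literally accepting runs of $\autof{\expt{S}\cup\expt{S'}}$, and obtain the other by arguing that an accepting run of the union automaton cannot mix the two worlds because $\Delta'$ sends non-interfering pairs to the sink $\emptyset$. The paper's proof is in fact \emph{terser} than yours: it only notes that $t_1\not\interf t_2$ for the leaf terms $t_1\in\expt{S}$, $t_2\in\expt{S'}$ and immediately concludes that any accepting run stays inside one component. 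Your Key Lemma $(\star)$ is exactly the propagation step that this conclusion silently needs, and your sink-avoidance/colouring surgery is a more careful rendering of the paper's ``therefore''.

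However, $(\star)$ is false for terms in the scope of the lemma, and this is a genuine gap. Take $S=\{u\}$ with $u=\{q,r\}$, where $q\in Q_1$ and $r\in Q_2$ lie in different base automata, and $S'=\{u'\}$ with $u'=\{s,p\}$, $s\in Q_3$, $p\in Q_4$; then $S\not\interf S'$. Since $u\interf u$ (by the $S=S'$ clause), $\Delta'_a(u,u)$ is computed structurally, and because the cross pairs $(q,r)$ and $(r,q)$ do \emph{not} interfere, it equals $\{t\}$ with $t=\delta^\B_a(q,q)\cup\delta^\B_a(r,r)\cup\{\emptyset\}$: the sink enters as an \emph{element} of a non-sink set term. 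Symmetrically, $w=\delta^\B_a(s,s)\cup\delta^\B_a(p,p)\cup\{\emptyset\}\in\reach_{\Delta'}(S')$. Now $u\not\interf w$ and $t\neq\emptyset$, yet $t\interf w$, because $\emptyset\in t$, $\emptyset\in w$, and $\emptyset\interf\emptyset$ by the $S=S'$ clause. So $(\star)$ fails, the two reach-sets \emph{do} interfere, and a differently-coloured pair of children in a run need not be sent to the sink---which breaks your monochromaticity argument. The lemma itself survives because such $\emptyset$-mediated mixed transitions produce only ``dead'' terms built entirely from $\emptyset$ (here $\Delta'_a(t,w)=\{\{\emptyset\}\}$, and similarly onward), on which $\rt$ is hereditarily false and which stay dead under all further transitions, so mixed runs still never accept; but this is an additional invariant that you neither state nor prove, and your own caveat about the $S=S'$ clause (``excluded for non-sink terms'') does not repair it, since $\emptyset$ occurs here as a proper element of non-sink terms. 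To be fair, the paper's two-line proof skips exactly the same point---it never considers terms reachable beyond the leaves---so your attempt makes the missing step visible; but $(\star)$ must be weakened to something like ``interference between the two reach-sets occurs only through $\emptyset$-only sub-terms, and transitions between such terms yield only $\emptyset$-only terms'' before the surgery goes through.
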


\section{Experimental Evaluation}

We have implemented the above introduced techniques (so far with the exception
of Section~\ref{sec:nondetUnion}) in a~prototype tool written
in Haskell.\footnote{The implementation is available at
\url{https://github.com/vhavlena/lazy-wsks}.}
The base automata, hard-coded into the tool, were the TAs for the basic
predicates from \secref{sec:prelims}, together with automata for predicates
$\singof X$ and $X = \{p\}$ for a~variable~$X$ and a~fixed
tree position~$p$.
As an additional optimisation, our tool uses the so-called \emph{antiprenexing}
(proposed already in~\cite{gaston}), which pushes quantifiers down the formula
tree using the standard logical equivalences.
%
Intuitively, antiprenexing reduces the complexity of elements within fixpoints by
removing irrelevant parts outside the fixpoint.


\begin{wraptable}[11]{r}{79mm}


  \caption{Experimental results over the family of formulae
  $\varphi_n^\mathit{pt} \equiv
  \forall Z_1, Z_2.~ \exists X_1, \ldots, X_n.~ \edge(Z_1,X_1) \wedge
  \bigwedge_{i = 1}^n \edge(X_i,X_{i+1}) \wedge \edge(X_n,Z_2)$ where
  $\edge(X,Y) \equiv \edgeL(X,Y) \vee \edgeR(X,Y)$ and $\edgeLR(X,Y) \equiv
  \exists Z.\ Z = \suckleftrightof X \wedge Z \subseteq Y$.}

  \vspace*{-2mm}

  \label{tab:edge}
	\sisetup{group-minimum-digits = 4}
	\sisetup{group-separator = {,}}
	\scalebox{0.85}{
\begin{tabular}{|S[table-format=2]|S[table-format=3.3]|S[table-format=3.3]|S[table-format=3.3]|S[table-format=6.1]|S[table-format=4.1]|S[table-format=4.1]|}
	\hline

        & \multicolumn{3}{c|}{\bfseries running time (sec)} &
                \multicolumn{3}{c|}{\bfseries ~\# of subterms/states~}\\
        	\multicolumn{1}{|c|}{\textbf{$n$}} & \multicolumn{1}{c|}{\emph{Lazy}} & \multicolumn{1}{c|}{\emph{Mona}} &
                \multicolumn{1}{c|}{\emph{~Mona+AP~}} & \multicolumn{1}{c|}{\emph{Lazy}} & \multicolumn{1}{c|}{\emph{Mona}} &
                \multicolumn{1}{c|}{\emph{~Mona+AP~}} \\

        \hline

        1 &  0.02 &  0.16 &  0.15 &  149 &  216 &  216 \\
	2 &  0.50 & {\tomo} & {\tomo} &  937 & {\tomo} & {\tomo} \\
	3 &  0.83 & {\tomo} & {\tomo} &  2487 & {\tomo} & {\tomo} \\
	4 &  34.95 & {\tomo} & {\tomo}  & 8391 & {\tomo} & {\tomo} \\
	5 &  60.94 & {\tomo} & {\tomo} &  23827 & {\tomo} & {\tomo} \\

        \hline
\end{tabular}}
\end{wraptable}


We have performed experiments with our tool on various formulae and compared its
performance with that of \mona.
We applied \mona both on the original form of the considered formulae as well as
on their versions obtained by antiprenexing (which is built into our tool and
which---as we realised---can significantly help \mona too).
Our preliminary implementation of product flattening (cf.~\secref{sec:prod_flat}) is
restricted to parts below the lowest fixpoint, and our experiments showed that
it does not work well when applied on this level, where the complexity is not
too high, so we turned it off for the experiments.
We ran all experiments on a 64-bit Linux Debian workstation with the Intel(R)
Core(TM) i7-2600 CPU running at 3.40\,GHz with 16\,GiB of RAM.
We used a timeout of 100\,s.



\begin{wraptable}[9]{r}{78mm}

  \vspace*{-6.5mm}

  \caption{Experimental results over the family of formulae
  $\varphi_n^{\mathit{cnst}} \equiv
  \exists X.\ X = \{(\tleft\tright)^4\} \land X = \{(\tleft\tright)^n\}$.}

  \vspace*{-1.5mm}

  \label{tab:constants}
	\sisetup{group-minimum-digits = 4}
	\sisetup{group-separator = {,}}
	\scalebox{0.78}{
\begin{tabular}{|S[table-format=4]|S[table-format=3.3]|S[table-format=3.3]|S[table-format=3.3]|S[table-format=6.1]|S[table-format=6.1]|S[table-format=6.1]|}
	\hline

	& \multicolumn{3}{c|}{\bfseries running time (sec)} &
					\multicolumn{3}{c|}{\bfseries ~\# of subterms/states~}\\
		\multicolumn{1}{|c|}{\textbf{$n$}} & \multicolumn{1}{c|}{\emph{Lazy}} & \multicolumn{1}{c|}{\emph{Mona}} &
					\multicolumn{1}{c|}{\emph{~Mona+AP~}} & \multicolumn{1}{c|}{\emph{Lazy}} & \multicolumn{1}{c|}{\emph{Mona}} &
					\multicolumn{1}{c|}{\emph{~Mona+AP~}} \\

        \hline

	80 &  14.60 &  40.07 &  40.05 &  1146 &  27913 &  27913 \\
	90 &  21.03 &  64.26 &  64.20 &  1286 &  32308 &  32308 \\
	100 &  28.57 &  98.42 &  98.91 &  1426 &  36258 &  36258 \\
	110 &  38.10 & {\tomo} & {\tomo} &  1566 & {\tomo} & {\tomo} \\
	120 &  49.82 & {\tomo} & {\tomo} &  1706 & {\tomo} & {\tomo} \\

        \hline
\end{tabular}}
\end{wraptable}


We first considered various WS2S formulae on which \mona was successfully
applied previously in the literature.
%
On them, our tool is quite slower than \mona, which is not much
surprising given the amount of optimisations built into \mona
(for instance, for the benchmarks from~\cite{strand1}, \mona on average
took~0.1\,s, while we timeouted).
Next, we identified several parametric families of formulae (adapted
from~\cite{gaston}), such as, e.g.,
$\varphi_n^{\mathit{horn}} \equiv \exists X.~ \forall X_1.~ \exists X_2, \ldots X_n.~ ( (X_1
\subseteq X \wedge X_1 \neq X_2) \Rightarrow X_2 \subseteq X) \wedge \ldots
\wedge ( (X_{n-1} \subseteq X \wedge X_{n-1} \neq X_n) \Rightarrow X_n \subseteq
X)$,
where our approach finished within 10\,ms, while the time of \mona was
increasing when increasing the parameter~$n$, going up to 32\,s for $n = 14$ and
timeouting for $k \geq 15$.
%
It turned out that \mona could, however, easily handle these formulae after
antiprenexing, again (slightly) outperforming our tool.
Finally, we also identified several parametric families of formulae that \mona
could handle only very badly or not at all, even with antiprenexing, while our tool can handle
them much better.
These formulae are mentioned in the captions of Tables~\ref{tab:edge},
\ref{tab:constants}, and \ref{tab:left-right}, which give detailed results of the
experiments.


\begin{wraptable}[8]{r}{72mm}

        \vspace*{-8mm}

        \caption{Experiments over the family $\varphi_n^{\mathit{sub}} = \forall
        X_1,\dots,X_n$ $\exists X.\ \bigwedge_{i = 1}^{n-1}X_i\subseteq X
        \Rightarrow (X_{i+1} = \suckleftof{X} \vee X_{i+1} = \suckrightof{X}).$}

  \vspace*{-2mm}

  \label{tab:left-right}
	\sisetup{group-minimum-digits = 4}
	\sisetup{group-separator = {,}}
	\scalebox{0.80}{
\begin{tabular}{|S[table-format=2]|S[table-format=2.3]|S[table-format=3.3]|S[table-format=3.3]|S[table-format=5.1]|S[table-format=4.1]|S[table-format=4.1]|}

	\hline

	& \multicolumn{3}{c|}{\bfseries running time (sec)} &
					\multicolumn{3}{c|}{\bfseries ~\# of subterms/states~}\\
		\multicolumn{1}{|c|}{\textbf{$n$}} & \multicolumn{1}{c|}{\emph{Lazy}} & \multicolumn{1}{c|}{\emph{Mona}} &
					\multicolumn{1}{c|}{\emph{~Mona+AP~}} & \multicolumn{1}{c|}{\emph{Lazy}} & \multicolumn{1}{c|}{\emph{Mona}} &
					\multicolumn{1}{c|}{\emph{~Mona+AP~}} \\

        \hline

        3 & 0.01 & 0.00 & 0.00 & 140 & 92 & 92 \\
        4 & 0.04 & 34.39 & 34.47 & 386 & 170 & 170 \\
        5 & 0.24 & {\tomo} & {\tomo} & 981 & {\tomo} & {\tomo} \\
        6 & 2.01 & {\tomo} & {\tomo} & 2376 & {\tomo} & {\tomo} \\

        \hline
\end{tabular}}
\end{wraptable}


Particularly, Columns 2--4 give the running times (in seconds) of our tool
(denoted \emph{Lazy}), \mona, and \mona
with
antipre\-nexing.
Columns 5--7 characterize the size of the generated terms and automata.
Namely, for our approach, we give the number of nodes in the final term tree
(with the leaves being states of the base TAs).
For \mona, we give the sum of the numbers of states of all the minimal
deterministic TAs constructed by \mona when evaluating the formula.
The ``--'' sign means a timeout or memory shortage.

The formulae considered in Tables~\ref{tab:edge}--\ref{tab:left-right} speak
about various paths in trees.
We were originally inspired by formulae kindly provided by Josh
Berdine, which arose from attempts to translate separation logic formulae to
WS2S (and use \mona to discharge them), which are beyond the capabilities of \mona (even with
antiprenexing).
We were also unable to handle them with our tool, but
%
%
our experimental results on the tree path formulae indicate (despite the prototypical
implementation) that our techniques can help one to handle some
complex graph formulae that are out of the capabilities of \mona.
Thus, they provide
a~new line of attack on deciding hard WS2S formulae, complementary to the
heuristics used in \mona.
Improving the techniques and combining them with the classical approach of
\mona is a~challenging subject for our future work.

\section{Related Work} \label{sec:related}

The seminal works~\cite{Buchi60,Rabin69} on the automata-logic connection were the milestones leading to what we call here the classical tree automata-based decision procedure for \wsks~\cite{Thatcher68}.
Its non-elementary worst-case complexity was proved in~\cite{Stockmeyer73},
and the work~\cite{glenn-wia-96} presents the first implementation, restricted to WS1S,
with the ambition to use heuristics to counter the high complexity.
The authors of \cite{tata}~provide an excellent survey of the classical results and literature related to \wsks and tree automata.

The tool \mona{}~\cite{monapaper} implements the classical decision procedures
for both WS1S and WS2S.
It is still the standard tool of choice
for deciding WS1S/\wsks formulae due to its all-around most robust performance.  The efficiency of \mona{} stems from many
optimizations, both higher-level (such as automata minimization, the encoding of
first-order variables used in models, or the use of multi-terminal BDDs to encode the
transition function of the automaton) as well as lower-level (e.g.~optimizations
of hash tables, etc.)~\cite{monasecrets,monarestrictions}.
The M2L(Str) logic, a dialect of WS1S, can also be decided by a~similar automata-based
decision procedure, implemented within, e.g., \osel{}~\cite{jmosel} or the
symbolic finite automata framework of~\cite{veanes}.
In particular, \osel{}
implements several optimizations (such as second-order value
numbering~\cite{osel-numbering}) that allow it to outperform \mona{} on some
benchmarks (\mona{} also provides an~M2L(Str) interface on top of the WS1S
decision procedure).

The original inspiration for our work are the antichain techniques for checking
universality and inclusion of finite
automata~\cite{doyen:antichain,wulf:antichains,bouajjani:antichain,abdulla:when}
and language emptiness of alternating automata \cite{afaantichain}, which use
symbolic computation together with subsumption to prune large state spaces
arising from subset construction.
This paper is a continuation of our work on WS1S, which started by
\cite{fiedor:tacas15}, where we discussed a basic idea of generalizing the
antichain techniques to a~WS1S decision procedure.
In \cite{gaston}, we then presented a~complete WS1S decision procedure based on
these ideas that is capable to rival \mona{} on already interesting benchmarks.
The work in~\cite{traytel:coalgebras} presents a decision procedure that,
although phrased differently, is in essence fairly similar to that
of~\cite{gaston}.
%
This paper generalizes~\cite{gaston} to WS2S.  It is not merely a
straightforward generalization of the word concepts to trees.  A nontrivial
transition was needed from language terms of~\cite{gaston}, with their semantics
being defined straightforwardly from the semantics of sub-terms, to tree
automata terms, with the semantics defined as a language of an automaton with
transitions defined inductively to the structure of the term.  This change makes
the reasoning and correctness proof considerably more complex, though the
algorithm itself stays technically quite simple.

Finally, Ganzow and Kaiser~\cite{ganzow:new} developed a~new decision procedure
for the weak mon\-a\-dic second-order logic on inductive structures within
their tool \toss{}.
Their approach completely avoids automata; instead, it is based on the Shelah's
composition method.
The paper reports that the \toss{} tool could outperform \mona{} on two families
of WS1S formulae, one derived from Presburger arithmetics and one formula of the
form that we mention in our experiments as problematic for \mona{} but solvable
easily by \mona{} with antiprenexing.

\subsubsection*{Acknowledgement}

We thank the anonymous reviewers for their helpful comments on how to improve
the exposition in this paper.
This work was supported by
the Czech Science Foundation project 17-12465S,
the FIT BUT internal project FIT-S-17-4014,
and The Ministry of Education, Youth and Sports from the
National Programme of Sustainability (NPU~II) project IT4Innovations
excellence in science---LQ1602.




\bibliographystyle{splncs}
\bibliography{bibliography,citaceMONA}

\begin{thebibliography}{10}

\bibitem{pale01}
M\o{}ller, A., Schwartzbach, M.:
\newblock {The Pointer Assertion Logic Engine}.
\newblock In: PLDI'01, ACM Press (2001) Also in SIGPLAN Notices 36(5), 2001.

\bibitem{glenn-wia-96}
Glenn, J., Gasarch, W.:
\newblock {Implementing WS1S via Finite Automata}.
\newblock In: Workshop on Implementing Automata. Volume 1260 of LNCS., Springer
  (1996)  50--63

\bibitem{monapaper}
Elgaard, J., Klarlund, N., M{\o}ller, A.:
\newblock {MONA 1.x: New Techniques for WS1S and WS2S}.
\newblock In: {CAV}'98. Volume 1427 of LNCS., BRICS, Department of Computer
  Science, Aarhus University, Springer (1998)  516--520

\bibitem{monamanual}
Klarlund, N., M{\o}ller, A.:
\newblock {MONA Version 1.4 User Manual}.
\newblock BRICS, Department of Computer Science, Aarhus University. (January
  2001) Notes Series NS-01-1. Available from \url{http://www.brics.dk/mona/}.
  Revision of BRICS NS-98-3.

\bibitem{strand1}
Madhusudan, P., Parlato, G., Qiu, X.:
\newblock {Decidable Logics Combining Heap Structures and Data}.
\newblock In: {POPL}'11, {ACM} (2011)  611--622

\bibitem{strand2}
Madhusudan, P., Qiu, X.:
\newblock {Efficient Decision Procedures for Heaps Using STRAND}.
\newblock In: {SAS}'11. Volume 6887 of LNCS., Springer (2011)  43--59

\bibitem{hip/sleek}
Chin, W., David, C., Nguyen, H.H., Qin, S.:
\newblock {Automated Verification of Shape, Size and Bag Properties via
  User-Defined Predicates in Separation Logic}.
\newblock Sci. Comput. Program. \textbf{77}(9) (2012)  1006--1036

\bibitem{jahob}
Zee, K., Kuncak, V., Rinard, M.C.:
\newblock {Full Functional Verification of Linked Data Structures}.
\newblock In: POPL'08, {ACM} (2008)  349--361

\bibitem{tateishi}
Tateishi, T., Pistoia, M., Tripp, O.:
\newblock {Path- and Index-Sensitive String Analysis Based on Monadic
  Second-Order Logic}.
\newblock {{ACM} Trans. Comput. Log.} \textbf{22}(4) (2013)  33:1--33:33

\bibitem{zhou}
Zhou, M., He, F., Wang, B., Gu, M., Sun, J.:
\newblock {Array Theory of Bounded Elements and its Applications}.
\newblock J. Autom. Reasoning \textbf{52}(4) (2014)  379--405

\bibitem{baukus}
Baukus, K., Bensalem, S., Lakhnech, Y., Stahl, K.:
\newblock {Abstracting WS1S Systems to Verify Parameterized Networks}.
\newblock In: {TACAS}'00. Volume 1785 of LNCS., Springer (2000)  188--203

\bibitem{bodeveix}
Bodeveix, J., Filali, M.:
\newblock {FMona: A Tool for Expressing Validation Techniques over Infinite
  State Systems}.
\newblock In: {TACAS}'00. Volume 1785 of LNCS., Springer (2000)  204--219

\bibitem{RaduBeeping2}
Bozga, M., Iosif, R., Sifakis, J.:
\newblock {Structural Invariants for Parametric Verification of Systems with
  Almost Linear Architectures}.
\newblock Technical Report arXiv:1902.02696 (2019)

\bibitem{KlaNieSun:casestudyautver}
Klarlund, N., Nielsen, M., Sunesen, K.:
\newblock {A Case Study in Automated Verification Based on Trace Abstractions}.
\newblock In: Formal System Specification, The RPC-Memory Specification Case
  Study. Volume 1169 of LNCS.
\newblock Springer (1996)

\bibitem{smith_00}
Smith, M.A., Klarlund, N.:
\newblock {Verification of a Sliding Window Protocol Using IOA and MONA}.
\newblock In: {FORTE/PSTV}'00. Volume 183 of {IFIP}., Kluwer (2000)  19--34

\bibitem{basin}
Basin, D., Klarlund, N.:
\newblock {Automata Based Symbolic Reasoning in Hardware Verification}.
\newblock In: {CAV}'98. LNCS, Springer (1998)  349--361

\bibitem{sandholm}
Sandholm, A., Schwartzbach, M.I.:
\newblock {Distributed Safety Controllers for Web Services}.
\newblock In: {FASE}'98, Springer (1998)  270--284

\bibitem{hune}
Hune, T., Sandholm, A.:
\newblock {A Case Study on Using Automata in Control Synthesis}.
\newblock In: {FASE}'00. Volume 1783 of LNCS., Springer (2000)  349--362

\bibitem{hamza}
Hamza, J., Jobstmann, B., Kuncak, V.:
\newblock {Synthesis for Regular Specifications over Unbounded Domains}.
\newblock In: {FMCAD}'10, {IEEE} Computer Science (2010)  101--109

\bibitem{linguistics97}
Morawietz, F., Cornell, T.:
\newblock {The Logic-Automaton Connection in Linguistics}.
\newblock In: LACL'97. Volume 1582 of LNAI.
\newblock Springer (1997)

\bibitem{kuncak:trex}
Wies, T., Mu{\~{n}}iz, M., Kuncak, V.:
\newblock {An Efficient Decision Procedure for Imperative Tree Data
  Structures}.
\newblock In: CADE'11. Volume 6803 of LNCS., Springer (2011)  476--491

\bibitem{doyen:antichain}
Doyen, L., Raskin, J.F.:
\newblock {Antichain Algorithms for Finite Automata}.
\newblock In: TACAS'10. Volume 6015 of LNCS., Springer (2010)  2--22

\bibitem{wulf:antichains}
Wulf, M.D., Doyen, L., Henzinger, T.A., Raskin, J.F.:
\newblock {Antichains: A New Algorithm for Checking Universality of Finite
  Automata}.
\newblock In: CAV'06. Volume 4144 of LNCS., Springer (2006)  17--30

\bibitem{bouajjani:antichain}
Bouajjani, A., Habermehl, P., Hol\'{\i}k, L., Touili, T., Vojnar, T.:
\newblock {Antichain-Based Universality and Inclusion Testing over
  Nondeterministic Finite Tree Automata}.
\newblock In: CIAA'08. Volume 5148 of LNCS., Springer (2008)  57--67

\bibitem{abdulla:when}
Abdulla, P.A., Chen, Y.F., Hol{\'\i}k, L., Mayr, R., Vojnar, T.:
\newblock {When Simulation Meets Antichains (on Checking Language Inclusion of
  NFAs)}.
\newblock In: TACAS'10. Volume 6015 of LNCS., Springer (2010)  158--174

\bibitem{habermehl:forest}
Habermehl, P., Hol{\'{\i}}k, L., Rogalewicz, A., \v{S}im{\'{a}}\v{c}ek, J.,
  Vojnar, T.:
\newblock {Forest Automata for Verification of Heap Manipulation}.
\newblock Formal Methods in System Design \textbf{41}(1) (2012)  83--106

\bibitem{antiLTL08}
Wulf, M.D., Doyen, L., Maquet, N., Raskin, J.F.:
\newblock {Antichains: Alternative Algorithms for LTL Satisfiability and
  Model-Checking}.
\newblock In: TACAS'08. Volume 4693 of LNCS.
\newblock Springer (2008)

\bibitem{antiGames06}
Wulf, M.D., Doyen, L., Raskin, J.F.:
\newblock {A Lattice Theory for Solving Games of Imperfect Information}.
\newblock In: HSCC'06. Volume 3927 of LNCS.
\newblock Springer (2006)

\bibitem{gaston}
Fiedor, T., Hol{\'{\i}}k, L., Jank\r{u}, P., Leng{\'{a}}l, O., Vojnar, T.:
\newblock {Lazy Automata Techniques for WS1S}.
\newblock In: TACAS'17. Volume 10205 of LNCS., Springer (2017)  407--425

\bibitem{lazy-cade}
Havlena, V., Hol\'{i}k, L., Leng\'{a}l, O., Vojnar, T.:
\newblock Automata terms in a lazy {WS$k$S} decision procedure.
\newblock In: Proc.\ of CADE-27. LNCS, Springer (2019)

\bibitem{tata}
Comon, H., Dauchet, M., Gilleron, R., L\"oding, C., Jacquemard, F., Lugiez, D.,
  Tison, S., Tommasi, M.:
\newblock Tree Automata Techniques and Applications.
\newblock (2008)

\bibitem{Buchi60}
B\"{u}chi, J.R.:
\newblock {On a Decision Method in Restricted Second-Order Arithmetic}.
\newblock In: International Congress on Logic, Methodology, and Philosophy of
  Science, Stanford University Press (1962)  1--11

\bibitem{Rabin69}
Rabin, M.O.:
\newblock {Decidability of Second Order Theories and Automata on Infinite
  Trees}.
\newblock Transactions of the American Mathematical Society \textbf{141} (1969)
   1--35

\bibitem{Thatcher68}
Thatcher, J.W., Wright, J.B.:
\newblock {Generalized Finite Automata Theory with an Application to a Decision
  Problem of Second-Order Logic}.
\newblock Mathematical systems theory \textbf{2}(1) (1968)  57--81

\bibitem{Stockmeyer73}
Stockmeyer, L.J., Meyer, A.R.:
\newblock {Word Problems Requiring Exponential Time (Preliminary Report)}.
\newblock In: Fifth Annual ACM Symposium on Theory of Computing. STOC'73, New
  York, NY, USA, ACM (1973)  1--9

\bibitem{monasecrets}
Klarlund, N., M\o{}ller, A., Schwartzbach, M.I.:
\newblock {MONA Implementation Secrets}.
\newblock International Journal of Foundations of Computer Science
  \textbf{13}(4) (2002)  571--586

\bibitem{monarestrictions}
Klarlund, N.:
\newblock {A~Theory of Restrictions for Logics and Automata}.
\newblock In: CAV'99. Volume 1633 of LNCS., Springer (1999)  406--417

\bibitem{jmosel}
Topnik, C., Wilhelm, E., Margaria, T., Steffen, B.:
\newblock {jMosel: A Stand-Alone Tool and jABC Plugin for M2L(Str)}.
\newblock In: SPIN'06. Volume 3925 of LNCS., Springer (2006)  293--298

\bibitem{veanes}
D'Antoni, L., Veanes, M.:
\newblock {Minimization of Symbolic Automata}.
\newblock In: POPL'14. (2014)  541--554

\bibitem{osel-numbering}
Margaria, T., Steffen, B., Topnik, C.:
\newblock {Second-Order Value Numbering}.
\newblock In: {GraMoT}'10. Volume~30 of ECEASST., EASST (2010)  1--15

\bibitem{afaantichain}
Doyen, L., Raskin, J.F.:
\newblock {Antichain Algorithms for Finite Automata}.
\newblock In: TACAS'10. Volume 6015 of LNCS., Springer (2010)  2--22

\bibitem{fiedor:tacas15}
Fiedor, T., Hol{\'{\i}}k, L., Leng\'{a}l, O., Vojnar, T.:
\newblock {Nested Antichains for WS1S}.
\newblock In: TACAS'15. Volume 9035 of LNCS., Springer (2015)

\bibitem{traytel:coalgebras}
Traytel, D.:
\newblock {A Coalgebraic Decision Procedure for WS1S}.
\newblock In: 24th EACSL Annual Conference on Computer Science Logic (CSL'15).
  Volume~41 of Leibniz International Proceedings in Informatics (LIPIcs).,
  Dagstuhl, Germany, Schloss Dagstuhl--Leibniz-Zentrum fuer Informatik (2015)
  487--503

\bibitem{ganzow:new}
Ganzow, T., Kaiser, L.:
\newblock {New Algorithm for Weak Monadic Second-Order Logic on Inductive
  Structures}.
\newblock In: {CSL}'10. Volume 6247 of LNCS., Springer (2010)  366--380

\end{thebibliography}



\vfill
\eject

\appendix


\vspace{-0.0mm}
\section{Proofs}\label{app:proofs}
\vspace{-0.0mm}

In the proofs we use an alternative definition of automata term semantics. First
we bring a notion of a \emph{term expansion} and an \emph{expanded term}.
Expanded term does not contain a derivative term as a subterm. Term expansion is
then defined recursively as follows:
\begin{inparaenum}[(i)]
	\item $\expt{t} = t$ if $t$ is expanded.
	\item $\expt{t} = \expt{(t[u/\saturof{u}])}$ where $u$ is a derivative term of
		the form $S-\alltreesx\Gamma$ where $S$ is a expanded term.
\end{inparaenum}
Intuitively in the term expansion, derivative subterms are saturated in a
bottom-up manner. Then, we have $\langof{\autof{\expt{t}}} = \langof{\autof{t}}$
and therefore, $\langof{\expt{t}} = \langof{t}$.


\lemmaFinite*

\begin{proof}
(Sketch) First, we define \emph{depth} of a term $t$ inductively as follows:
\begin{inparaenum}[(i)]
	\item $\theiof{q} = 1$ for $q\in Q^\B$,
	\item $\theiof{t_1 \circ t_2} = 1 + \max(\theiof{t_1}, \theiof{t_2})$ for
		$\circ\in\{ \conj, \disj \}$,
	\item $\theiof{\diamond t_1} = 1 + \theiof{t_1}$ for $\diamond\in\{ \pi_X,
		\overline{\cdot} \}$,
	\item $\theiof{S} = 1 + \max_{t\in S}(\theiof{t})$, and
	\item $\theiof{S - \alltreesx\syms} = 1 + \theiof{S}$.
\end{inparaenum}
Then since the number of reachable states in base automata is finite, for a
given $n$ there is a finite number of terms of depth at most $n$. Moreover, for
two terms $t_1$ and $t_2$ and each $ t\in\Transof{t_1}{t_2} a$ we have
$\theiof{t} \leq \max(\theiof{t_1}, \theiof{t_2})$. Therefore, for an automaton
term $S$ it holds that $\reach_\Delta(S)$ is finite.

\qed
\end{proof}

\medskip


\lemmaSem*

\begin{proof}
	\eqref{sem:eq:deter}: We prove more general form of \eqref{sem:eq:deter}
	namely $\seman{\{ \aterm_1, \dots, \aterm_n \}} = \seman{\bigcup_{1\leq i\leq
	n}\expt{\aterm_i}}$ $(\subseteq)$ We start with the following reasoning:
	$\tau\in\seman{\{ \aterm_1, \dots, \aterm_n \}}$ iff there is accepting run
	$\rho$ on $\tau$ in $\autof{\{ \expt{\aterm_1}, \dots, \expt{\aterm_n} \}}$
	having all leaf states from $\{ \expt{\aterm_1}, \dots, \expt{\aterm_n} \}$.
	For simplicity we set $\Xi = \reachxof {\bigcup_{1\leq i\leq
	n}\expt{\aterm_i}} \Trans \Sigma$. Moreover, $\forall
	w\in\domof{\tau}\setminus\leafsof \tau,\ t\in\rho(w)$ we have that $\exists
	t_1 \in\rho(w.\tleft), t_2 \in\rho(w.\tright):\
	t\in\Transof{t_1}{t_2}{\tau(w)} \subseteq \rho(w)$. Since this run is
	accepting, there is a $r\in\rho(\epsilon)$ s.t. $\rt(r)$. Therefore, we are
	able to construct the mapping $\rho'$ on $\domof{\tau}$ defined as
	$\rho'(\epsilon) = r$, $\rho'(w)
	\in\Transof{\rho'(w.\tleft)}{\rho'(w.\tright)}{\tau(w)}$, and $\rho'(w) \in
	\rho(w)$ for $w\in\domof \tau$. Hence $\forall w \in\leafsof \tau:\
	\rho'(w)\in\bigcup_{1\leq i\leq n}\expt{\aterm_i}$. It means that $\rho'(w)
	\in \Xi$ for each $w\in\domof t$, and therefore $\rho'$ is an accepting run on
	$\tau$ in $\autof{\bigcup \expt{\aterm_i}}$, i.e., $\tau\in
	\seman{\bigcup_{1\leq i\leq n}\aterm_i}$.

	$(\supseteq)$ Consider $\tau\in \seman{\bigcup_{1\leq i\leq
	n}\expt{\aterm_i}}$. Then there is an accepting run $\rho$ on $\tau$ in
	$\autof{\bigcup \expt{\aterm_i}}$. We can then construct the mapping $\rho'$
	on $\domof{\tau}$ defined as $\rho'(u) = S_u$ and $\rho'(w)
	\in\Transof{\rho'(w.\tleft)}{\rho'(w.\tright)}{\tau(w)}$ for $u\in\leafsof
	\tau, w\in\domof \tau$ where $\rho(u) \in S_u \wedge S_u = \aterm_i$ for some
	$1\leq i \leq n$. We have that $\forall w \in\domof \tau:\ \rho(w) \in
	\rho'(w)$ and therefore $\rho'$ is an accepting run on $\tau$ in $\autof{\{
	\expt{\aterm_1}, \dots, \expt{\aterm_n} \}}$, i.e., $\tau\in\seman{\{
	\aterm_1, \dots, \aterm_n\}}$.
	\bigskip

	\eqref{sem:eq:disj}: $(\subseteq)$ We again start with the reasoning:
	$\tau\in\seman{\{ \aterm_1 \disj \aterm_2 \}}$ iff there is accepting run
	$\rho$ on $\tau$ in $\autof{\{ \expt{\aterm_1} \disj \expt{\aterm_2} \}}$.
	Further since $\rho$ is accepting, we can define mappings $\rho_1$, $\rho_2$
	on $\domof\tau$ s.t. $\forall w\in\domof\tau:\ \rho_1(w) = l(\rho(w)) \wedge
	\rho_1(w) = r(\rho(w))$ where $l(S_1\disj S_2) = S_1$, $r(S_1\disj S_2) =
	S_2$. Moreover, $\rho_1$ is a run on $\tau$ in $\autof{\{ \expt{\aterm_1} \}}$
	and $\rho_2$ is a run in $\autof{\{\expt{\aterm_2} \}}$. We also have
	$\rt(\rho(\epsilon))$ hence $\rt(\rho_1(\epsilon)) \vee
	\rt(\rho_2(\epsilon))$. Therefore $\tau\in\langof{\autof{\{ \expt{\aterm_1}
	\}}} \vee \tau\in \langof{\autof{\{ \expt{\aterm_2} \}}}$, i.e.,
	$\tau\in\seman{\{ \aterm_1 \}} \cup \seman{\{ \aterm_2 \}}$ and from
	\eqref{sem:eq:deter} we get the desired form.

	$(\supseteq)$ Consider $\tau\in\seman{ \aterm_1 } \cup \seman{ \aterm_2 }$.
	From \eqref{sem:eq:deter} we get $\tau\in\seman{\{ \aterm_1 \}} \cup \seman{\{
	\aterm_2 \}}$. Then there are runs $\rho_1$ in $\autof{\{ \expt{\aterm_1} \}}$
	and $\rho_2$ in $\autof{\{ \expt{\aterm_2} \}}$ on $\tau$ s.t. at least one of
	them is accepting. We can define mapping $\rho$ on $\domof\tau$ s.t. $\forall
	w\in\domof\tau:\ \rho(w) = \rho_1(w) \disj \rho_2(w)$. Such defined mapping is
	an accepting run on $\tau$ in $\autof{\{ \expt{\aterm_1} \disj \expt{\aterm_2}
	\}}$. Therefore $\tau\in\seman{\{ \aterm_1 \disj \aterm_2 \}}$. \bigskip


	\eqref{sem:eq:conj}: Analogy to $(b)$.
	\bigskip

	\eqref{sem:eq:compl}: We start with the following reasoning: $\tau\in\seman{\{
	\overline{\aterm_1} \}}$ iff there is accepting run $\rho$ on $\tau$ in
	$\autof{\{ \overline{\expt{\aterm_1}} \}}$. Since in $\autof{\{
	\overline{\expt{\aterm_1}} \}}$ there is only one leaf state and for each
	$a\in\Sigma$: $|\Transof{\overline{S_1}}{\overline{S_2}} a|\leq 1$, there is
	at most one accepting run on each tree. The same holds also for $\autof{\{
	\expt{\aterm_1} \}}$. Note that both $\autof{\{ \expt{\aterm_1} \}}$ and
	$\autof{\{ \overline{\expt{\aterm_1}} \}}$ are complete. Therefore $\rho$ is a
	run on $\tau$ in $\autof{\{ \expt{\aterm_1} \}}$ iff $\overline\rho$ is a run
	on $\tau$ in $\autof{\{ \overline{\expt{\aterm_1}} \}}$ where $\forall
	w\in\domof\tau:\ \overline\rho(w) = \overline{\rho(w)}$. From the definition
	of $\rt$ we further have $\neg\rt(\rho(\epsilon)) \Leftrightarrow
	\rt(\overline\rho(\epsilon))$. Therefore $\rho$ is not accepting in $\autof{\{
	\expt{\aterm_1} \}}$ iff $\overline\rho$ is accepting in $\autof{\{
	\overline{\expt{\aterm_1}} \}}$, which implies $\tau\in L(\autof{\{
	\overline{\expt{\aterm_1}} \}})$ iff $\tau\notin L(\autof{\{ \expt{\aterm_1}
	\}})$ and from \eqref{sem:eq:deter} we get the desired form.
	\bigskip

	\eqref{sem:eq:proj}: $(\subseteq)$ Consider $\tau\in\seman{\{ \pi_X(\aterm_1)
	\}}$. Then there is an accepting run $\rho$ on $\tau$ in $\autof{\{
	\pi_X(\expt{\aterm_1}) \}}$. From the definition of transition function we get
	that there is the accepting run $\rho'$ on some $\tau'$ in $\autof{\{
	\expt{\aterm_1} \}}$ where $\tau\in\pi_X(\tau')$ and $\forall w\in\domof\tau:\
	\rho(w) = \pi_X(\rho'(w))$. Therefore, $\tau\in \pi_X(\seman{\{\aterm_1\}}) =
	\pi_X(\seman{\aterm_1})$.



	$(\supseteq)$ Consider $\tau\in\pi_X(\seman{\aterm_1})$. Then, there is $\tau'
	\in\seman{\aterm_1}$ s.t. $\tau\in\pi_X(\tau')$. According to the part
	\eqref{sem:eq:deter}, there is an accepting run $\rho$ on $\tau'$ in
	$\autof{\{ \expt{\aterm_1} \}}$. Then there is also the accepting run $\rho'$
	on $\tau$ in $\autof{\{ \pi_X(\expt{\aterm_1}) \}}$ where $\forall
	w\in\domof\tau:\ \rho'(w) = \pi_X(\rho(w))$, which concludes the proof.
	\bigskip

  \eqref{sem:eq:deriv}: We prove more general form of the equality,
  $\seman{\aterm_1}-\alltreesx\syms = \seman{\aterm_1-\alltreesx\syms}$ for a
  set of symbols $\syms$. Note that $\aterm_1$ is a set term. In the following
  text, for a set term $S$ and a set of symbols $\syms$ we define $S
  \ominus\Gamma = \expt{S} \cup \bigcup\{\Transof {t_1} {t_2} a \mid t_1, t_2\in
  \expt{S}, a\in \Gamma\}$. Note since $\syms\subseteq\Sigma$, we have
  $\reachxof{\expt{S}}{\Trans}{\Sigma} = \reachxof{S
  \ominus\Gamma}{\Trans}{\Sigma}$. Moreover, a set of trees of height at most
  $n$ containing symbols from $\syms$ we denote by $\syms^n$. Formally, $\syms^n
  = \{ t\in\alltreesx\syms~|~ \forall w\in \domof{t}:\ |w|\leq n \}$. Note that
  $|w|$ denotes the length of a word $w$. We begin with a claim $\seman{S
  \ominus\syms} = \seman{S} - \syms^1$.

  $\subseteq$: Consider a tree $\tau\in\seman{S \ominus\syms}$. Therefore there
  is an accepting run $\rho$ on $\tau$ in $\autof{S \ominus\syms}$ having leaf
  states in $S \ominus\syms$. Moreover, for each $w\in\leafsof\tau$ s.t.
  $\rho(w)\notin\expt{S}$ it holds that $\exists t_\tleft^w,
  t_\tright^w\in\expt{S}, a\in\syms:\ \rho(w)
  \in\Transof{t_\tleft^w}{t_\tright^w} a$. Hence, we can extend the run $\rho$
  to $\rho'$ defined as $\rho'_{|\domof\tau} = \rho$ and $\forall
  w\in\leafsof\tau, \rho(w)\notin\expt{S}:\ \rho'(w.\tleft) = t_\tleft^w \wedge
  \rho'(w.\tright) = t_\tright^w$. The mapping $\rho'$ is a run in
  $\autof{\expt{S}}$ on a tree $\tau'\in\seman{S}$ where $\tau\in \tau' -
  \syms^1$, and hence $\tau\in \seman{S} - \syms^1$.

	$\supseteq$: Consider $\tau\in\seman{S} - \syms^1$. Then there is a
	$\tau'\in\seman{S}$ s.t. $\tau \in\tau'-\syms^1$. Hence there is an accepting
	run $\rho'$ on $\tau'$ in $\autof{\expt{S}}$. Now consider the set $\Theta =
	\{ w\in\leafsof\tau~|~\rho'(w)\notin \expt{S}\}$.  Since $\tau
	\in\tau'-\syms^1$, we have $\forall w\in\Theta:\ \rho'(w.\tleft)
	\in\expt{S}\wedge\rho'(w.\tright) \in\expt{S}\wedge \tau'(w)\in\syms$.
	Therefore, $\rho = \rho'_{|\domof\tau}$ is an accepting run on $\tau$ in
	$\autof{S \ominus\syms}$, i.e., $\tau\in\seman{S \ominus\syms}$.


  We proceed to main part of the lemma. Consider a sequence of automata terms
  $S_0 = \expt{\sterm}$, $S_1 = S_0 \ominus\syms$, $S_{i+1} = S_i \ominus\syms$.
  Because the set of all terms that can occur in $S_i$ is finite
  (Lemma~\ref{lemma:finite}), there is some $n_0$ s.t. for all $n'' \geq n_0$
  and $ n' \geq n_0$ we have $S_{n'} = S_{n''}$. Moreover, $S_{n_0} =
  \reachxof{\expt{\sterm}}{\Trans}{\syms}$. From the previous claim we have
  $\seman{S_i} = \seman{\sterm}-\syms^i$ and consequently $\bigcup_{i \geq
  1}\seman{S_i} = \bigcup_{i \geq 1}\seman{\sterm}-\syms^i = \seman{\sterm} -
  \alltreesx\syms$. Moreover from the previous reasoning we have $\seman{S_{n'}}
  = \seman{S_{n''}}$ for $n'' \geq n_0, n' \geq n_0$. Hence $\bigcup_{i \geq
  1}\seman{S_i} = \seman{S_{n_0}}$ (follows from $\seman{S_i} \subseteq
  \seman{S_{i+1}}$). Finally we have $\seman{S_{n_0}} =
  \seman{\reachxof{\expt{\sterm}}{\Trans}{\syms}} =
  \seman{\sterm-\alltreesx\syms} = \seman{\sterm} - \alltreesx\syms$.
	\qed
\end{proof}

\medskip


\thmLanguagesMatch*

\begin{proof}
For the purpose of this proof we restrict the definition of terms to
\emph{deterministic terms} constructed using the following grammar:
\begin{align}
	D &::= \{ d,\dots,d \}~|~\{ \pi_X(d),\dots,\pi_X(d) \} \\
  d &::= S~|~d \disj d~|~d \conj d~|~\overline d~|~D~|~D - \alltreesx\syms
\end{align}
where $D$ is a finite set of deterministic terms and $S$ is a finite set of
terms. Note that for two expanded deterministic terms $t_1$, $t_2$ we have
$|\Transof{t_1}{t_2} a| = 1$. Further note that for a WS2S formula $\varphi$,
$\translof{\varphi}$ is a deterministic term.

Now, we prove $\langof\varphi = \seman{\{\translof{\varphi}\}}$ by a structural
induction on $\varphi$. We use properties of the classical decision procedure.
\begin{itemize}
  \item[--] $\varphi = \varphi_{\atom}$ where $\varphi_{\atom}$ is an atomic
		formula: From the translation formula to terms and
		Lemma~\ref{lem:sem}~\eqref{sem:eq:deter} we directly have
		$\langof{\varphi_{\atom}} = \seman{\translof{\varphi_{\atom}}} =
		\seman{\{\translof{\varphi_{\atom}}\}} $.
  \item[--] $\varphi = \psi_1 \land \psi_2$:
    From the translation formula to terms and
		Lemma~\ref{lem:sem}~\eqref{sem:eq:deter} we get
    \begin{equation}\label{eq:corr-conj}
      \begin{split}
      \seman{\{\translof{\varphi}\}} =
      \seman{\{ \translof{\psi_1} \conj \translof{\psi_2} \}} =
      \seman{\{\{ \translof{\psi_1} \conj \translof{\psi_2} \}\}}.
      \end{split}
    \end{equation}
    Further, from \eqref{eq:corr-conj}, Lemma~\ref{lem:concise-conj} and
    Lemma~\ref{lem:sem}~\eqref{sem:eq:conj} we obtain
    \begin{equation}
      \seman{\{\{ \translof{\psi_1} \conj \translof{\psi_2} \}\}} = \seman{\{
      \{\translof{\psi_1}\} \conj \{\translof{\psi_2} \}\}} =
      \seman{\{\translof{\psi_1}\}} \cap \seman{\{\translof{\psi_2}\}}.
    \end{equation}
    Finally from IH we have $\seman{\{\translof{\varphi}\}} = \langof{\psi_1} \cap
		\langof{\psi_2} = \langof\varphi$.

  \item[--] $\varphi = \psi_1 \lor \psi_2$:
    From the translation formula to terms and
		Lemma~\ref{lem:sem}~\eqref{sem:eq:deter} we get
    \begin{equation}\label{eq:corr-disj}
      \begin{split}
      \seman{\{\translof{\varphi}\}} &=
      \seman{\{ \translof{\psi_1} \disj \translof{\psi_2} \}} =
      \seman{\{\{ \translof{\psi_1} \disj \translof{\psi_2} \}\}}.
      \end{split}
    \end{equation}
    Further, from \eqref{eq:corr-disj}, Lemma~\ref{lem:concise-conj} and
    Lemma~\ref{lem:sem}~\eqref{sem:eq:disj} we obtain
    \begin{equation}
      \seman{\{ \{\translof{\psi_1}\} \disj \{\translof{\psi_2} \}\}} =
      \seman{\{\translof{\psi_1}\}} \cup \seman{\{\translof{\psi_2}\}}.
    \end{equation}
    Finally from IH we have $\seman{\{\translof{\varphi}\}} = \langof{\psi_1}
    \cup \langof{\psi_2} = \langof\varphi$.
  \item[--] $\varphi = \neg\psi$:
		First, we prove the following claim: Let $t$ be a deterministic term, then
		$\seman{\left\{\overline{\{t\}}\right\}} =
		\seman{\left\{\overline{t}\right\}}$. Proof: First consider two expanded
		deterministic terms $t_1, t_2$. Since $t_1, t_2$ are deterministic, we have
		$\Transof{t_1}{t_2} a = \{ t' \}$ for some deterministic term $t'$.
		Therefore, $\Transof{\overline{t_1}}{\overline{t_2}} a = \{ \overline{t'}
		\}$ and $\Transof{\overline{\{t_1\}}}{\overline{\{t_2\}}} a = \{
		\overline{\{t'\}} \}$. Hence, there is an accepting run $\rho$ on a tree
		$\tau$ in $\autof{\left\{\overline{\{t\}}\right\}}$ iff there is an
		accepting run $\rho'$ on a tree $\tau$ in
		$\autof{\left\{\overline{t}\right\}}$ where $\forall w\in\domof\tau:\
		\rho(w) = \overline{s} \wedge \rho'(w) = \overline{\{s\}}$.

    We proceed in the main part of the theorem. From the translation formula to
    terms and from the previous claim we get
    \begin{equation}\label{eq:corr-neg}
      \seman{\{\translof{\varphi}\}} =
      \seman{\left\{\overline{\translof{\psi}}\right\}} =
      \seman{\left\{\overline{\{\translof{\psi}\}}\right\}}.
    \end{equation}
    Finally from \eqref{eq:corr-neg}, Lemma~\ref{lem:sem}~\eqref{sem:eq:compl}
		and IH we have
    \begin{equation}
      \seman{\{\translof{\varphi}\}} =
      \overline{\seman{\{ \translof{\psi} \} }} =
      \overline{\langof\psi} = \langof{\varphi}.
    \end{equation}
  \item[--] $\varphi = \exists X.\ \psi$:
		First, we prove the following claim: Let $t$ be a deterministic term, then
		$\seman{\left\{\pi_X(\{t\})\right\}} = \seman{\left\{\pi_X(t)\right\}}$.
		Proof: First consider two expanded deterministic terms $t_1, t_2$. Since
		$t_1, t_2$ are deterministic, for each $a$ we have $\Transof{t_1}{t_2} a =
		\{ t_a \}$ for some deterministic term $t_a$. Therefore,
		$\Transof{\pi_X(t_1)}{\pi_X(t_2)} a = \{ \pi_X(t_b)~|~b\in\pi_X(a) \}$ and
		$\Transof{\pi_X(\{t_1\})}{\pi_X(\{t_2\})} a = \{
		\pi_X(\{t_b\})~|~b\in\pi_X(a) \}$. Hence, there is an accepting run $\rho$
		on a tree $\tau$ in $\autof{\left\{\pi_X(\{t\})\right\}}$ iff there is an
		accepting run $\rho'$ on a tree $\tau$ in $\autof{\left\{\pi_X(t)\right\}}$
		where $\forall w\in\domof\tau:\ \rho(w) = \pi_X(s) \wedge \rho'(w) =
		\pi_X(\{s\})$.

    We proceed in the main part of the theorem. From the translation formula to
		terms we get
		\begin{align}\label{eq:corr-proj}
      \seman{\{\translof{\varphi}\}} =
      \seman{\left\{\pi_{X}(\translof{\psi})\right\} - \alltreesx{\zerosymb} }
    \end{align}
    Further, from \eqref{eq:corr-proj},
    Lemma~\ref{lem:sem}~\eqref{sem:eq:deriv}, and the previous claim we have
    \begin{equation}\label{eq:corr-minus}
      \seman{\{\translof{\varphi}\}} =
      \seman{\left\{\pi_{X}(\translof{\psi})\right\}} - \alltreesx{\zerosymb} =
			\seman{\left\{\pi_{X}(\{\translof{\psi}\})\right\}} - \alltreesx{\zerosymb}.
    \end{equation}
    Then from \eqref{eq:corr-minus} and Lemma~\ref{lem:sem}~\eqref{sem:eq:proj}
		we obtain
    \begin{equation}\label{eq:corr-fin}
      \seman{\{\translof{\varphi}\}} =
			\pi_X\left(\seman{\left\{\translof{\psi}\right\}}\right) - \alltreesx{\zerosymb}.
    \end{equation}
    IH together with~\eqref{eq:corr-fin} give us
    \begin{equation}
      \seman{\{\translof{\varphi}\}} =
      \pi_{X}(\langof\psi) - \alltreesx{\zerosymb} = \langof\varphi.
    \end{equation}
\end{itemize}
Finally, we have $L(\varphi) =
\seman{\{ \translof{\varphi} \}}$.
\qed
\end{proof}

\medskip

%
%
%


\lemmaProductFlat*

\begin{proof}
$(a)$: $(\subseteq)$: Consider some $\tau\in\seman{\{S \disj S'\}}$. From
Lemma~\ref{lem:sem} we have $\seman{\{S \disj S'\}} = \seman{S} \cup
\seman{S'}$. Hence there are runs $\rho_1$ in $\autof{\expt{S}}$ and $\rho_2$ in
$\autof{\expt{S'}}$ on $\tau$ and at least one them is accepting (both runs
exist since the transition function $\Delta$ is total). Then, we can construct a
mapping $\rho$ on $\domof\tau$ defined as $\forall w\in\domof\tau:\ \rho(w) =
\rho_1(w)+\rho_2(w)$. The $\rho$ is a run on $\tau$ in $\autof{\{ \expt{t_1}
\disj \expt{t_2}~|~ t_1\in S, t_2\in S' \}}$. Moreover, this run is accepting
since $\rho_1$ or $\rho_2$ is accepting. Therefore, $\tau\in\seman{\{ t_1 \disj
t_2~|~ t_1\in S, t_2\in S' \}}$ and from Lemma~\ref{lem:sem} $\tau\in\seman{\{S
\timesof\conj S' \}}$.

$(\supseteq)$: Consider some $\tau\in\seman{\{S \timesof\conj S' \}}$. Then from
Lemma~\ref{lem:sem} we obtain that $\tau\in\\\seman{\{ t_1 \disj t_2~|~t_1\in S,
t_2\in S' \}}$. Then, there is the accepting run $\rho$ on $\tau$ in $\autof{\{
\expt{t_1} \disj \expt{t_2}~|~ t_1\in S, t_2\in S' \}}$. Further, we are able to
construct the run $\rho'$ on $\domof\tau$ in $\autof{\{S \disj S'\}}$ such that
$\forall w\in\domof\tau:\ \rho'(w)=S_1+S_2 $ where $\rho(w) = t_1+t_2, t_1\in
S_1 \wedge t_2\in S_2$. Since $\rho$ is accepting, $\rho'$ is accepting as well.
Therefore, $\tau\in\seman{\{S \disj S'\}}$.
\bigskip

$(b)$: Analogy to $(a)$.
\bigskip

$(c)$: From Lemma~\ref{lem:sem} we have that $\seman{\{ \pi_X(S) \}} =
\pi_X(\seman{S})$. We prove that $\pi_X(\seman{S}) = \seman{\{ \pi_X(t)~|~t\in S
\}}$.

$(\subseteq)$: Consider some $\tau\in\pi_X(\seman{S})$. Then, there is a tree
$\tau'\in\seman{S}$ such that $\tau\in\pi_X(\tau')$. Therefore, there is a
accepting run $\rho$ on $\tau'$ in $\autof{\expt{S}}$ and hence there is the
accepting run $\rho'$ on $\tau'$ in $\autof{\{ \pi_X(t)~|~t\in \expt{S} \}}$
defined as $\forall w\in\domof\tau:\ \rho'(w) = \pi_X(\rho(w))$ which implies
$\tau\in\seman{\{ \pi_X(t)~|~t\in S \}}$.

$(\supseteq)$: Consider $\tau\in\in\seman{\{ \pi_X(t)~|~t\in S \}}$. Therefore,
threre is an accepting run $\rho'$ on some $\tau'$ in $\autof{\expt{S}}$ defined
as $\forall w\in\domof\tau:\ \rho'(w) = t$ where $\rho(w) = \pi_X(t)$. Moreover,
we have $\tau\in\pi_X(\tau')$. Hence $\tau\in\pi_X(\seman{S})$.
\qed
\end{proof}

\medskip

\lemmaNondetUnion*

\begin{proof}
$(\subseteq)$: From Lemma~\ref{lem:sem} for a modified transition function, we
have $\langpof{\{ S \disj S' \}} = \langpof{S} \cup \langpof{S'}$. Now
assume $\tau\in \langpof{S} \cup \langpof{S'}$. Then, there is an accepting
run $\rho$ on $\tau$ either in $\autof{\expt{S}}$ or in $\autof{\expt{S'}}$.
Therefore, $\rho$ is an accepting run on $\tau$ also in $\autof{\expt{S}\cup
\expt{S'}}$.

$(\supseteq)$: We assume that $\tau\in\langpof{S \cup S'}$. Since for each
$t_1\in \expt{S}$ and $t_2\in \expt{S'}$ holds $t_1 \not\bowtie t_2$, we have
that $t\in\Delta'_a(t_1,t_2)$ is equal to $\emptyset$ (and vice versa).
Therefore, if $\rho$ is an accepting run on $\tau$ in $\autof{\expt{S}\cup
\expt{S'}}$, then $\rho$ is an accepting run in $\autof{\expt{S}}$ or in
$\autof{\expt{S'}}$. Hence, $\tau\in\langpof{\{ S \disj S' \}}$.
\qed
\end{proof}
\vfill
\eject

\vspace{-0.0mm}
\section{Basic WS$k$S Predicates}\label{app:wsks-pred}
\vspace{-0.0mm}
\begin{align}
	X = \emptyset &\Leftrightarrow \forall Y.\ X \subseteq Y \\
	\singof X &\Leftrightarrow \forall Y.\ Y \subseteq X \Rightarrow Y = X \wedge Y = \emptyset
\end{align}

\vspace{-0.0mm}
\section{Basic TAs}\label{sec:basicTA}
\vspace{-0.0mm}

\begin{figure}[h]
  \begin{subfigure}[b]{0.3\linewidth}
  \begin{center}
    \begin{tikzpicture}[scale=0.95,transform shape]

\tikzstyle{state}=[draw,circle,inner sep=0.8mm]
\tikzstyle{hidnode}=[inner sep=0]
\tikzstyle{ref}=[inner sep=1mm]
\tikzstyle{lab}=[]

\tikzstyle{trans}=[->,>=stealth']
\tikzstyle{hidtrans}=[]
\tikzstyle{ark}=[]


\node[hidnode] (start2) at (2mm,3mm) {};
\node[state] (q0) [right of=start2,node distance=24mm] {$q_0$};
\node[state,accepting] (q1) [above of=q0,node distance=20mm] {$q_1$};

\node[hidnode,minimum size=0pt] (q1south) [below of=q1,node distance=5mm] {};
\node[hidnode,minimum size=0pt] (q1west) [left of=q1,node distance=5mm] {};
\node[hidnode,minimum size=0pt] (q1east) [right of=q1,node distance=5mm] {};
\node[hidnode,minimum size=0pt] (q2south) [below of=q0,node distance=5mm] {};

\node[hidnode] (r2loop_con1) [below of=q0,node distance=10mm,xshift=-4mm] {};
\node[hidnode] (r2loop_con2) [left of=q0,node distance=6mm,xshift=-6mm,yshift=-8mm] {};

\node[hidnode] (r2loop_con5) [below of=q0,node distance=10mm,xshift=4mm] {};
\node[hidnode] (r2loop_con6) [right of=q0,node distance=6mm,xshift=6mm,yshift=-8mm] {};

\node[hidnode] (r2loop_con3) [above of=q1,node distance=10mm,xshift=4mm] {};
\node[hidnode] (r2loop_con4) [right of=q1,node distance=6mm,xshift=6mm,yshift=8mm] {};

\node[hidnode] (r2loop_con7) [above of=q1,node distance=10mm,xshift=-4mm] {};
\node[hidnode] (r2loop_con8) [left of=q1,node distance=6mm,xshift=-6mm,yshift=8mm] {};


\draw[-] (q1.north)
	.. controls (r2loop_con3) and (r2loop_con4) ..
	coordinate[very near end] (qnew_t2_1)
        node[pos=0.35,above,yshift=-0.0mm,xshift=0mm] {$\scriptstyle\mathtt{L}$}
	(q1east.west);

\draw[-] (q0)
	edge[bend right=65]
	coordinate[very near end] (qnew_t2_2)
        node[pos=0.35,right,yshift=-0.0mm,xshift=0mm] {$\scriptstyle\mathtt{R}$}
	(q1east.west);

\draw[trans] (q1east) edge (q1.east);
\draw[ark] (qnew_t2_1) edge[bend left=30] (qnew_t2_2) node [right] {\scalebox{0.7}{$\unitrack{X}{0}$}};;


\draw[-] (q1.north)
	.. controls (r2loop_con7) and (r2loop_con8) ..
	coordinate[very near end] (qnew_t2_1)
        node[pos=0.35,above,yshift=-0.0mm,xshift=0mm] {$\scriptstyle\mathtt{R}$}
	(q1west.east);

\draw[-] (q0)
	edge[bend left=65]
	coordinate[very near end] (qnew_t2_2)
        node[pos=0.35,left,yshift=-0.0mm,xshift=0mm] {$\scriptstyle\mathtt{L}$}
	(q1west.east);

\draw[trans] (q1west) edge (q1.west);
\draw[ark] (qnew_t2_1) edge[bend right=30] (qnew_t2_2) node [left] {\scalebox{0.7}{$ \unitrack{X}{0}$}};;


\draw[-] (q0.east)
	.. controls (r2loop_con6) and (r2loop_con5) ..
	coordinate[very near end] (qnew_t2_1)
        node[pos=0.35,right,yshift=-0.0mm,xshift=0mm] {$\scriptstyle\mathtt{R}$}
	(q2south.north);

\draw[-] (q0.west)
	.. controls (r2loop_con2) and (r2loop_con1) ..
	coordinate[very near end] (qnew_t2_2)
        node[pos=0.35,left,yshift=-0.0mm,xshift=0mm] {$\scriptstyle\mathtt{L}$}
	(q2south.north);

\draw[trans] (q2south) edge (q0.south);
\draw[ark] (qnew_t2_1) edge[bend left=30] (qnew_t2_2) node [below,xshift=-2mm] {\scalebox{0.7}{$\unitrack{X}{0}$}};


\draw[-] (q0)
	edge[bend left=30]
	coordinate[very near start] (qnew_t2_1)
        node[pos=0.5,left] {$\scriptstyle\mathtt{R}$}
	(q1south.north);

\draw[-] (q0)
	edge[bend right=30]
	coordinate[very near start] (qnew_t2_2)
        node[pos=0.5,right] {$\scriptstyle\mathtt{L}$} (q1south.north);

\draw[trans] (q1south) edge (q1.south);
\draw[ark] (qnew_t2_1) edge[bend left=30] (qnew_t2_2) node [above,xshift=2.4mm,yshift=2mm] {\scalebox{0.7}{$\unitrack{X}{1}$}};;


\node[hidnode] (start3) [left of=q0, below of=q0,xshift=-0mm,node distance=5mm] {};
\draw[trans] (start3) edge (q0);

\end{tikzpicture}
  \end{center}
  \caption{$\A_{\singof X}$}
  \label{label}
  \end{subfigure}
  ~
  \begin{subfigure}[b]{0.3\linewidth}
  \begin{center}
    \begin{tikzpicture}[scale=0.95,transform shape]

\tikzstyle{state}=[draw,circle,inner sep=0.8mm]
\tikzstyle{hidnode}=[inner sep=0]
\tikzstyle{ref}=[inner sep=1mm]
\tikzstyle{lab}=[]

\tikzstyle{trans}=[->,>=stealth']
\tikzstyle{hidtrans}=[]
\tikzstyle{ark}=[]


\node[hidnode] (start2) at (2mm,3mm) {};
\node[state] (q0) [right of=start2,node distance=24mm] {$p_0$};
\node[state,accepting] (q1) [above of=q0,node distance=20mm] {$p_1$};

\node[hidnode,minimum size=0pt] (q1south) [below of=q1,node distance=5mm] {};
\node[hidnode,minimum size=0pt] (q2south) [below of=q0,node distance=5mm] {};

\node[hidnode] (r2loop_con1) [below of=q0,node distance=10mm,xshift=-4mm] {};
\node[hidnode] (r2loop_con2) [left of=q0,node distance=6mm,xshift=-6mm,yshift=-8mm] {};

\node[hidnode] (r2loop_con5) [below of=q0,node distance=10mm,xshift=4mm] {};
\node[hidnode] (r2loop_con6) [right of=q0,node distance=6mm,xshift=6mm,yshift=-8mm] {};


\draw[-] (q0.east)
	.. controls (r2loop_con6) and (r2loop_con5) ..
	coordinate[very near end] (qnew_t2_1)
        node[pos=0.35,right,yshift=-0.0mm,xshift=0mm] {$\scriptstyle\mathtt{R}$}
	(q2south.north);

\draw[-] (q0.west)
	.. controls (r2loop_con2) and (r2loop_con1) ..
	coordinate[very near end] (qnew_t2_2)
        node[pos=0.35,left,yshift=-0.0mm,xshift=0mm] {$\scriptstyle\mathtt{L}$}
	(q2south.north);

\draw[trans] (q2south) edge (q0.south);
\draw[ark] (qnew_t2_1) edge[bend left=30] (qnew_t2_2) node [below,xshift=-2mm] {\scalebox{0.7}{$\unitrack{X}{0}$}};


\draw[-] (q0)
	edge[bend left=30]
	coordinate[very near start] (qnew_t2_1)
        node[pos=0.5,left] {$\scriptstyle\mathtt{R}$}
	(q1south.north);

\draw[-] (q0)
	edge[bend right=30]
	coordinate[very near start] (qnew_t2_2)
        node[pos=0.5,right] {$\scriptstyle\mathtt{L}$} (q1south.north);

\draw[trans] (q1south) edge (q1.south);
\draw[ark] (qnew_t2_1) edge[bend left=30] (qnew_t2_2) node [above,xshift=2.4mm,yshift=1mm] {\scalebox{0.7}{$\unitrack{X}{1}$}};;


\node[hidnode] (start3) [left of=q0, below of=q0,xshift=-0mm,node distance=5mm] {};
\draw[trans] (start3) edge (q0);

\end{tikzpicture}
  \end{center}
  \caption{$\A_{X = \{ \epsilon \}}$}
  \label{label}
  \end{subfigure}
	~
	\begin{subfigure}[b]{0.3\linewidth}
  \begin{center}
    \begin{tikzpicture}[scale=0.95,transform shape]

\tikzstyle{state}=[draw,circle,inner sep=0.8mm]
\tikzstyle{hidnode}=[inner sep=0]
\tikzstyle{ref}=[inner sep=1mm]
\tikzstyle{lab}=[]

\tikzstyle{trans}=[->,>=stealth']
\tikzstyle{hidtrans}=[]
\tikzstyle{ark}=[]


\node[hidnode] (start2) at (2mm,3mm) {};
\node[state,accepting] (q0) [right of=start2,node distance=24mm] {$r_0$};

\node[hidnode,minimum size=0pt] (q1north) [above of=q0,node distance=5mm] {};
\node[hidnode,minimum size=0pt] (q1east) [right of=q0,node distance=5mm] {};
\node[hidnode,minimum size=0pt] (q1west) [left of=q0,node distance=5mm] {};

\node[hidnode] (r2loop_con1) [below of=q0,node distance=10mm,xshift=-4mm] {};
\node[hidnode] (r2loop_con2) [left of=q0,node distance=6mm,xshift=-0mm,yshift=-8mm] {};

\node[hidnode] (r2loop_con3) [left of=q0,node distance=8mm,xshift=-2mm] {};
\node[hidnode] (r2loop_con4) [left of=q0,above of=q0,node distance=6mm,xshift=-3mm,yshift=3mm] {};

\node[hidnode] (r2loop_con7) [left of=q0,node distance=8mm,xshift=-2mm] {};
\node[hidnode] (r2loop_con8) [left of=q0,below of=q0,node distance=6mm,xshift=-3mm,yshift=-3mm] {};

\node[hidnode] (r2loop_con5) [below of=q0,node distance=10mm,xshift=4mm] {};
\node[hidnode] (r2loop_con6) [right of=q0,node distance=6mm,xshift=0mm,yshift=-8mm] {};

\node[hidnode] (r2loop_con9) [right of=q0,node distance=8mm,xshift=2mm] {};
\node[hidnode] (r2loop_con10) [right of=q0,above of=q0,node distance=6mm,xshift=3mm,yshift=3mm] {};

\node[hidnode] (r2loop_con11) [right of=q0,node distance=8mm,xshift=2mm] {};
\node[hidnode] (r2loop_con12) [right of=q0,below of=q0,node distance=6mm,xshift=3mm,yshift=-3mm] {};

\node[hidnode] (r2loop_con13) [above of=q0,node distance=8mm,yshift=2mm] {};
\node[hidnode] (r2loop_con14) [right of=q0,above of=q0,node distance=6mm,xshift=3mm,yshift=3mm] {};

\node[hidnode] (r2loop_con15) [above of=q0,node distance=8mm,yshift=2mm] {};
\node[hidnode] (r2loop_con16) [left of=q0,above of=q0,node distance=6mm,xshift=-3mm,yshift=3mm] {};


\draw[-] (q0.north west)
	.. controls (r2loop_con4) and (r2loop_con3)  ..
	coordinate[very near end] (qnew_t2_1)
        node[pos=0.35,left,yshift=-0.0mm,xshift=0mm] {$\scriptstyle\mathtt{R}$}
	(q1west.east);

\draw[-] (q0.south west)
	.. controls (r2loop_con8) and (r2loop_con7) ..
	coordinate[very near end] (qnew_t2_2)
        node[pos=0.35,below,yshift=-0.0mm,xshift=0mm] {$\scriptstyle\mathtt{L}$}
	(q1west.east);

\draw[trans] (q1west) edge (q0.west);
\draw[ark] (qnew_t2_1) edge[bend right=30] (qnew_t2_2) node [left,yshift=-2mm] {\scalebox{0.7}{$\bintrack{X}{Y}{1}{1}$}};

\draw[-] (q0.north east)
	.. controls (r2loop_con10) and (r2loop_con9)  ..
	coordinate[very near end] (qnew_t2_1)
        node[pos=0.35,right,yshift=-0.0mm,xshift=0mm] {$\scriptstyle\mathtt{R}$}
	(q1east.west);

\draw[-] (q0.south east)
	.. controls (r2loop_con12) and (r2loop_con11) ..
	coordinate[very near end] (qnew_t2_2)
        node[pos=0.35,below,yshift=-0.0mm,xshift=0mm] {$\scriptstyle\mathtt{L}$}
	(q1east.west);

\draw[trans] (q1east) edge (q0.east);
\draw[ark] (qnew_t2_1) edge[bend left=30] (qnew_t2_2) node [right,yshift=-2mm] {\scalebox{0.7}{$\bintrack{X}{Y}{0}{1}$}};

\draw[-] (q0.north west)
	.. controls (r2loop_con16) and (r2loop_con15)  ..
	coordinate[very near end] (qnew_t2_1)
        node[pos=0.35,left,yshift=-0.0mm,xshift=0mm] {$\scriptstyle\mathtt{L}$}
	(q1north.south);

\draw[-] (q0.north east)
	.. controls (r2loop_con14) and (r2loop_con13)  ..
	coordinate[very near end] (qnew_t2_2)
        node[pos=0.35,right,yshift=-0.0mm,xshift=0mm] {$\scriptstyle\mathtt{R}$}
	(q1north.south);

\draw[trans] (q1north) edge (q0.north);
\draw[ark] (qnew_t2_1) edge[bend left=30] (qnew_t2_2) node [above,yshift=2mm] {\scalebox{0.7}{$\bintrack{X}{Y}{0}{0}$}};



\node[hidnode] (start3) [below of=q0,xshift=-0mm,node distance=6mm] {};
\draw[trans] (start3) edge (q0);

\end{tikzpicture}
  \end{center}
  \caption{$\A_{X \subseteq Y}$}
  \label{label}
  \end{subfigure}
	~
	\begin{subfigure}[b]{0.3\linewidth}
  \begin{center}
    \begin{tikzpicture}[scale=0.95,transform shape]

\tikzstyle{state}=[draw,circle,inner sep=0.8mm]
\tikzstyle{hidnode}=[inner sep=0]
\tikzstyle{ref}=[inner sep=1mm]
\tikzstyle{lab}=[]

\tikzstyle{trans}=[->,>=stealth']
\tikzstyle{hidtrans}=[]
\tikzstyle{ark}=[]


\node[hidnode] (start2) at (2mm,3mm) {};
\node[state,accepting] (q0) [right of=start2,node distance=24mm] {$s_0$};
\node[state] (q1) [above of=q0,node distance=20mm] {$s_1$};

\node[hidnode,minimum size=0pt] (q1south) [below of=q1,node distance=5mm] {};
\node[hidnode,minimum size=0pt] (q1west) [left of=q1,node distance=5mm] {};
\node[hidnode,minimum size=0pt] (q2east) [right of=q0,node distance=5mm,yshift=0mm] {};
\node[hidnode,minimum size=0pt] (q2south) [below of=q0,node distance=5mm] {};

\node[hidnode] (r2loop_con1) [below of=q0,node distance=10mm,xshift=-4mm] {};
\node[hidnode] (r2loop_con2) [left of=q0,node distance=6mm,xshift=-6mm,yshift=-8mm] {};

\node[hidnode] (r2loop_con5) [below of=q0,node distance=10mm,xshift=4mm] {};
\node[hidnode] (r2loop_con6) [right of=q0,node distance=6mm,xshift=6mm,yshift=-8mm] {};

\node[hidnode] (r2loop_con3) [above of=q0,node distance=10mm,xshift=4mm] {};
\node[hidnode] (r2loop_con4) [right of=q0,node distance=6mm,xshift=6mm,yshift=8mm] {};

\node[hidnode] (r2loop_con7) [above of=q1,node distance=10mm,xshift=-4mm] {};
\node[hidnode] (r2loop_con8) [left of=q1,node distance=6mm,xshift=-6mm,yshift=8mm] {};


\draw[-] (q1.east)
	edge[bend left=100]
	coordinate[very near end] (qnew_t2_1)
        node[pos=0.35,right,yshift=-0.0mm,xshift=0mm] {$\scriptstyle\mathtt{L}$}
	(q2east.west);

\draw[-] (q0)
	.. controls (r2loop_con3) and (r2loop_con4) ..
	coordinate[very near end] (qnew_t2_2)
        node[pos=0.35,above,right,yshift=2mm,xshift=0mm] {$\scriptstyle\mathtt{R}$}
	(q2east.west);

\draw[trans] (q2east) edge (q0.east);
\draw[ark] (qnew_t2_1) edge[bend right=30] (qnew_t2_2) node [right,xshift=1mm] {\scalebox{0.7}{$\bintrack{X}{Y}{0}{1}$}};;


\draw[-] (q1.north)
	.. controls (r2loop_con7) and (r2loop_con8) ..
	coordinate[very near end] (qnew_t2_1)
        node[pos=0.35,above,yshift=-0.0mm,xshift=0mm] {$\scriptstyle\mathtt{R}$}
	(q1west.east);

\draw[-] (q0)
	edge[bend left=65]
	coordinate[very near end] (qnew_t2_2)
        node[pos=0.35,left,yshift=-0.0mm,xshift=0mm] {$\scriptstyle\mathtt{L}$}
	(q1west.east);

\draw[trans] (q1west) edge (q1.west);
\draw[ark] (qnew_t2_1) edge[bend right=30] (qnew_t2_2) node [left] {\scalebox{0.7}{$ \bintrack{X}{Y}{1}{1}$}};;


\draw[-] (q0.east)
	.. controls (r2loop_con6) and (r2loop_con5) ..
	coordinate[very near end] (qnew_t2_1)
        node[pos=0.35,right,yshift=-0.0mm,xshift=0mm] {$\scriptstyle\mathtt{R}$}
	(q2south.north);

\draw[-] (q0.west)
	.. controls (r2loop_con2) and (r2loop_con1) ..
	coordinate[very near end] (qnew_t2_2)
        node[pos=0.35,left,yshift=-0.0mm,xshift=0mm] {$\scriptstyle\mathtt{L}$}
	(q2south.north);

\draw[trans] (q2south) edge (q0.south);
\draw[ark] (qnew_t2_1) edge[bend left=30] (qnew_t2_2) node [below,xshift=-2mm] {\scalebox{0.7}{$\bintrack{X}{Y}{0}{0}$}};


\draw[-] (q0)
	edge[bend left=40]
	coordinate[very near start] (qnew_t2_1)
        node[pos=0.5,left] {$\scriptstyle\mathtt{R}$}
	(q1south.north);

\draw[-] (q0)
	edge[bend right=20]
	coordinate[very near start] (qnew_t2_2)
        node[pos=0.5,right] {$\scriptstyle\mathtt{L}$} (q1south.north);

\draw[trans] (q1south) edge (q1.south);
\draw[ark] (qnew_t2_1) edge[bend left=30] (qnew_t2_2) node [above,xshift=2.4mm,yshift=2.1mm] {\scalebox{0.7}{$\bintrack{X}{Y}{1}{0}$}};


\node[hidnode] (start3) [left of=q0, below of=q0,xshift=-0mm,node distance=5mm] {};
\draw[trans] (start3) edge (q0);

\end{tikzpicture}
  \end{center}
  \caption{$\A_{X = \suckleftof Y}$}
  \label{label}
  \end{subfigure}
	~
	\begin{subfigure}[b]{0.3\linewidth}
  \begin{center}
    \begin{tikzpicture}[scale=0.95,transform shape]

\tikzstyle{state}=[draw,circle,inner sep=0.8mm]
\tikzstyle{hidnode}=[inner sep=0]
\tikzstyle{ref}=[inner sep=1mm]
\tikzstyle{lab}=[]

\tikzstyle{trans}=[->,>=stealth']
\tikzstyle{hidtrans}=[]
\tikzstyle{ark}=[]


\node[hidnode] (start2) at (2mm,3mm) {};
\node[state,accepting] (q0) [right of=start2,node distance=15mm] {$t_0$};
\node[state] (q1) [above of=q0,node distance=20mm] {$t_1$};

\node[hidnode,minimum size=0pt] (q1south) [below of=q1,node distance=5mm] {};
\node[hidnode,minimum size=0pt] (q1west) [left of=q1,node distance=5mm] {};
\node[hidnode,minimum size=0pt] (q2east) [right of=q0,node distance=5mm,yshift=0mm] {};
\node[hidnode,minimum size=0pt] (q2south) [below of=q0,node distance=5mm] {};

\node[hidnode] (r2loop_con1) [below of=q0,node distance=10mm,xshift=-4mm] {};
\node[hidnode] (r2loop_con2) [left of=q0,node distance=6mm,xshift=-6mm,yshift=-8mm] {};

\node[hidnode] (r2loop_con5) [below of=q0,node distance=10mm,xshift=4mm] {};
\node[hidnode] (r2loop_con6) [right of=q0,node distance=6mm,xshift=6mm,yshift=-8mm] {};

\node[hidnode] (r2loop_con3) [above of=q0,node distance=10mm,xshift=4mm] {};
\node[hidnode] (r2loop_con4) [right of=q0,node distance=6mm,xshift=6mm,yshift=8mm] {};

\node[hidnode] (r2loop_con7) [above of=q1,node distance=10mm,xshift=-4mm] {};
\node[hidnode] (r2loop_con8) [left of=q1,node distance=6mm,xshift=-6mm,yshift=8mm] {};


\draw[-] (q1.east)
	edge[bend left=100]
	coordinate[very near end] (qnew_t2_1)
        node[pos=0.35,right,yshift=-0.0mm,xshift=0mm] {$\scriptstyle\mathtt{R}$}
	(q2east.west);

\draw[-] (q0)
	.. controls (r2loop_con3) and (r2loop_con4) ..
	coordinate[very near end] (qnew_t2_2)
        node[pos=0.35,above,right,yshift=2mm,xshift=0mm] {$\scriptstyle\mathtt{L}$}
	(q2east.west);

\draw[trans] (q2east) edge (q0.east);
\draw[ark] (qnew_t2_1) edge[bend right=30] (qnew_t2_2) node [right,xshift=1mm] {\scalebox{0.7}{$\bintrack{X}{Y}{0}{1}$}};;


\draw[-] (q1.north)
	.. controls (r2loop_con7) and (r2loop_con8) ..
	coordinate[very near end] (qnew_t2_1)
        node[pos=0.35,above,yshift=-0.0mm,xshift=0mm] {$\scriptstyle\mathtt{L}$}
	(q1west.east);

\draw[-] (q0)
	edge[bend left=65]
	coordinate[very near end] (qnew_t2_2)
        node[pos=0.35,left,yshift=-0.0mm,xshift=0mm] {$\scriptstyle\mathtt{R}$}
	(q1west.east);

\draw[trans] (q1west) edge (q1.west);
\draw[ark] (qnew_t2_1) edge[bend right=30] (qnew_t2_2) node [left] {\scalebox{0.7}{$ \bintrack{X}{Y}{1}{1}$}};;


\draw[-] (q0.east)
	.. controls (r2loop_con6) and (r2loop_con5) ..
	coordinate[very near end] (qnew_t2_1)
        node[pos=0.35,right,yshift=-0.0mm,xshift=0mm] {$\scriptstyle\mathtt{R}$}
	(q2south.north);

\draw[-] (q0.west)
	.. controls (r2loop_con2) and (r2loop_con1) ..
	coordinate[very near end] (qnew_t2_2)
        node[pos=0.35,left,yshift=-0.0mm,xshift=0mm] {$\scriptstyle\mathtt{L}$}
	(q2south.north);

\draw[trans] (q2south) edge (q0.south);
\draw[ark] (qnew_t2_1) edge[bend left=30] (qnew_t2_2) node [below,xshift=-2mm] {\scalebox{0.7}{$\bintrack{X}{Y}{0}{0}$}};


\draw[-] (q0)
	edge[bend left=40]
	coordinate[very near start] (qnew_t2_1)
        node[pos=0.5,left] {$\scriptstyle\mathtt{R}$}
	(q1south.north);

\draw[-] (q0)
	edge[bend right=20]
	coordinate[very near start] (qnew_t2_2)
        node[pos=0.5,right] {$\scriptstyle\mathtt{L}$} (q1south.north);

\draw[trans] (q1south) edge (q1.south);
\draw[ark] (qnew_t2_1) edge[bend left=30] (qnew_t2_2) node [above,xshift=3mm,yshift=2mm] {\scalebox{0.7}{$\bintrack{X}{Y}{1}{0}$}};;


\node[hidnode] (start3) [left of=q0, below of=q0,xshift=-0mm,node distance=5mm] {};
\draw[trans] (start3) edge (q0);

\end{tikzpicture}
  \end{center}
  \caption{$\A_{X = \suckrightof Y}$}
  \label{label}
  \end{subfigure}
\caption{Tree automata of atomic WS2S formulae.}
\label{fig:basicTA}
\end{figure}
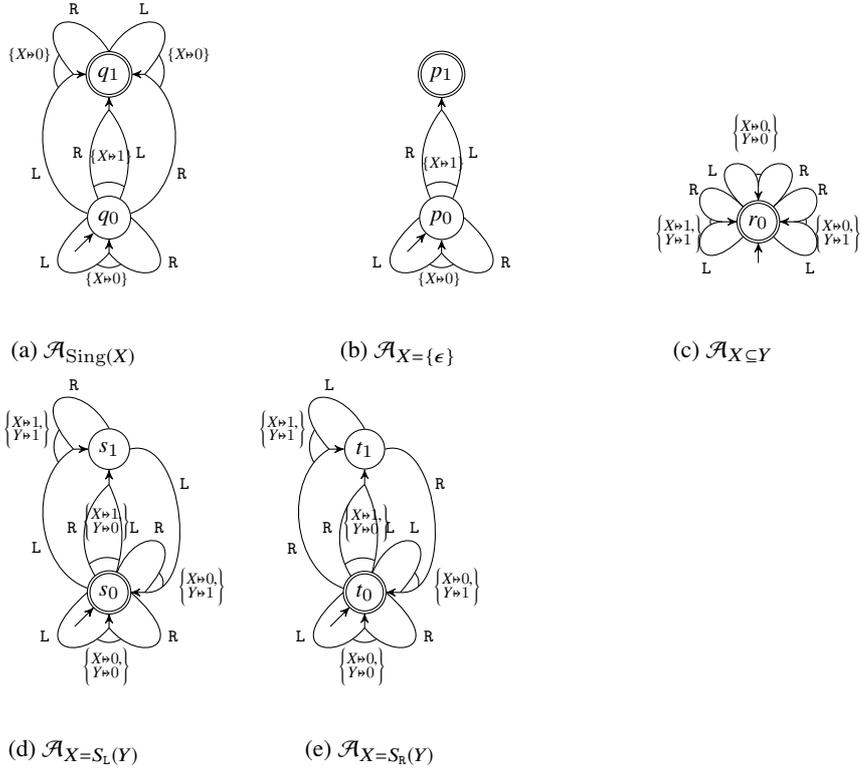

\end{document}